\documentclass{amsart}
\usepackage{amsmath}
\usepackage{graphicx}
\usepackage{amsfonts}
\usepackage{amssymb}

\newtheorem{theorem}{Theorem}

\newtheorem{corollary}[theorem]{Corollary}

\newtheorem{lemma}[theorem]{Lemma}

\newtheorem{proposition}[theorem]{Proposition}
\newtheorem{remark}[theorem]{Remark}





\frenchspacing \setlength{\textheight}{23cm}
\setlength{\textwidth}{15.5cm} \setlength{\topmargin}{-0.2cm}
\setlength{\headheight}{0.4cm} \setlength{\headsep}{0.4cm}
\setlength{\topskip}{0.7cm} \setlength{\footskip}{0.7cm}
\setlength{\oddsidemargin}{0.2cm} \setlength{\evensidemargin}{0.2cm}
\setcounter{secnumdepth}{3} \setcounter{tocdepth}{3}

\newcommand{\weakstar}{\stackrel{\ast}{\rightharpoonup}}
\newcommand{\weakst}{\stackrel{\ast}{\rightharpoonup}}

\newcommand{\weak}{\rightharpoonup}

\newcommand{\e}{\varepsilon} 
\newcommand{\en}{\varepsilon_n} 
\renewcommand{\phi}{\varphi} 
\newcommand{\Om}{\Omega} 
\newcommand{\om}{\Omega} 
\newcommand{\R}{\mathbb{R}} 
\newcommand{\M}{\mathbb{M}}
\newcommand{\Md}{\mathbb{M}^{2\times 2}}
\newcommand{\Mdsym}{\mathbb{M}_\text{sym}^{2\times 2}}
\newcommand{\C}{\mathbb{C}}
\newcommand{\Z}{\mathbb{Z}}
\newcommand{\N}{\mathbb{N}}

\newcommand{\LM}[1]{\hbox{\vrule width.2pt \vbox to#1pt{\vfill \hrule
width#1pt
height.2pt}}}
\newcommand{\LL}{{\mathchoice
{\,\LM7\,}{\,\LM7\,}{\,\LM5\,}{\,\LM{3.35}\,}}}

\newcommand{\cE}{{\mathcal{E}}}

\newcommand{\Hh}{{\mathcal{H}}}

\newcommand{\Curl}{{\rm Curl\,}}
\newcommand{\curl}{{\rm curl\,}}
\newcommand{\Div}{{\rm Div\,}}
\newcommand{\bs}{\beta^{\rm sym}}
\newcommand{\sym}{^{\rm sym}}
\newcommand{\as}{^{\rm skew}}

\newcommand{\ms}{{\mathcal M}}
\newcommand{\asmn}{\mathcal{A}\mathcal{S}_{\e}}

\newcommand{\asm}{\mathcal{A}\mathcal{S}}
\newcommand{\res}{\mathop{\hbox{\vrule height 7pt width .5pt depth 0pt
\vrule height .5pt width 6pt depth 0pt}}\nolimits}

\newcommand{\f}{\varphi}

\newcommand{\F}{\mathcal{F}}

\newcommand{\Fdi}{\mathcal{F}^{\operatorname*{dilute}}}
\newcommand{\Fsu}{\mathcal{F}^{\operatorname*{super}}}

\title{Gradient theory for plasticity via homogenization of discrete dislocations}
\author[A. Garroni]
{Adriana Garroni}
\address[Adriana Garroni]{Dip. Mat. ``G. Castelnuovo'', Universit\`a di Roma ``La Sapienza'',
P.le Aldo Moro~5, I-00185 Roma, Italy.}
\email[A. Garroni]{garroni@mat.uniroma1.it}

\author[G. Leoni]
{Giovanni Leoni}
\address[Giovanni Leoni]{Department of Mathematical Sciences,
Carnegie Mellon University, Pittsburgh, PA, USA.}
\email[G. Leoni]{giovanni@andrew.cmu.edu }

\author[M. Ponsiglione]
{Marcello Ponsiglione}
\address[M. Ponsiglione]
{Dip. Mat. ``G. Castelnuovo'', Universit\`a di Roma ``La Sapienza'',
P.le Aldo Moro~5, I-00185 Roma, Italy.}
\email[M. Ponsiglione]{ponsigli@mat.uniroma1.it}

\begin{document}

\vskip .2truecm
\begin{abstract}
\small{In this paper, we deduce a macroscopic strain gradient theory for plasticity from a  model of discrete dislocations.
\par
We restrict our analysis to the case of a cylindrical symmetry for
the crystal in exam, so that the mathematical formulation will
involve a two dimensional variational problem.
\par
The dislocations are introduced as point topological defects of the strain fields, for which we compute
the elastic energy stored outside the so called core region.
We show that the $\Gamma$-limit as the core radius tends to zero and  the number of dislocations tends
to infinity of this energy (suitably rescaled), takes the form
$$
E= \int_{\Om} (W(\beta^e)  + \f (\Curl \beta^e)) \, dx,
$$
where $\beta^e$ represents the elastic part of the macroscopic strain, and $\Curl \beta^e$ represents the
geometrically necessary dislocation density.
The plastic energy density $\f$ is defined explicitly
through an asymptotic cell formula, depending only on the elastic tensor and the class of the admissible
Burgers vectors, accounting for the crystalline structure.
It turns out to be positively $1$-homogeneous, so that concentration on lines is permitted, accounting for
the presence of pattern formations observed in crystals such as  dislocation walls.

\vskip .3truecm
\noindent Keywords : variational models, energy minimization, relaxation, plasticity, strain gradient theories,
stress concentration, dislocations.
\vskip.1truecm
\noindent 2000 Mathematics Subject Classification:
35R35, 49J45, 74C05, 74E10, 74E15, 74B10, 74G70.}
\end{abstract}

\maketitle
\tableofcontents

\section{Introduction}

In the last decade the study of crystal defects such as {\it dislocations} has been increasingly active.
The presence of dislocations (and their motion) is indeed considered the main mechanism of plastic deformations in metals.
Various phenomenological models have been proposed to account for  plastic effects due to dislocations,
such as the so called {\it strain gradient} theories.
\par
The aim of this paper is to provide a rigorous derivation of a strain gradient theory for plasticity as a
mesoscopic limit of systems of discrete dislocations, which are introduced  as point  defects of the strain fields,
for which we compute the elastic energy stored outside the so called {\it core region}.
\par
We  focus on the stored energy, disregarding dissipation, and  we
restrict our analysis to the case of a cylindrical symmetry for the
crystal in exam, so that the mathematical formulation will involve a
two dimensional variational problem.
\par
Many theories of plasticity are framed within  linearized
elasticity. In classical linear elasticity a displacement of $\om$
is a regular vector field $u:\om\to \R^2$. The equilibrium equations
have the form $\Div \C[e(u)]=0$, with $\C$ a linear operator from
$\R^{2\times2}$ into itself, and $e(u):=\frac {1}{2}(\nabla u
+(\nabla u)^{\top})$ the infinitesimal strain tensor. The
corresponding \emph{elastic energy} is
\begin{equation}
\int_{\om}W(\nabla u)\,dx \label{eps-functional0},
\end{equation}
where
\begin{equation}
W(\xi)=\frac{1}{2}  \C\xi:\xi \label{elastic density},
\end{equation}
is the \emph{elastic energy density}, and  the elasticity tensor $\C$ satisfies
\begin{equation}\label{coercon}
c_{1}|\xi^{\operatorname*{sym}}|^{2}\leq \C\xi:\xi \leq
c_{2}|\xi^{\operatorname*{sym}}|^{2} \qquad \text{ for any }
\xi\in\M^{2\times 2}\,,
\end{equation}
where $\xi^{\operatorname*{sym}}:=\frac12(\xi+\xi^{\top})$ and $c_1$
and $c_2$ are two given positive constants. In this linear framework
the presence of plastic deformations is classically modeled by the
additive decomposition of  the gradient of the displacement in the
elastic strain $\beta^e$ and the plastic strain $\beta^p$, i.e.,
$\nabla u=\beta^e +\beta^p$. The elastic energy induced by a given
plastic strain is then
\begin{equation}\label{pladist}
\int_\om W(\nabla u -\beta^p)\,dx\,.
\end{equation}
In view of their microscopic nature the presence of dislocations, responsible for the plastic deformation,
can be modeled in the continuum  by assigning the Curl of the field $\beta^p$. The quantity $\Curl \beta^p=\mu$
is then called the {\it Nye's dislocation density tensor},  (see \cite{Nye}).
Inspired by this idea, the strain gradient theory for plasticity, first proposed by Fleck and Hucthinson \cite{FH}
and then developed by Gurtin (see for instance \cite{G}, \cite{G2}), assigns an additional phenomenological energy
to the plastic deformation depending on the dislocation density, so that the stored energy looks like
\begin{equation}\label{pladist2}
\int_\om W(\nabla u -\beta^p)\,dx + \int_\om \varphi(\Curl \beta^p)\,dx\,.
\end{equation}
Note that for a given preexisting strain $\beta^p$, the minimum of the energy in \eqref{pladist} (possibly subject
to external loads) depends only on $\Curl \beta^p$. Hence the relevant variable in this problem is a strain field
$\beta$ whose Curl is prescribed.
The main issue with this model is the choice of the function
$\varphi$ in \eqref{pladist2}. In fact the usual choice (see for
instance \cite{GA}) is to take $\varphi$ quadratic, and then to fit
the free parameters through experiments as well as simulations, as
in \cite{NVG}. This choice has the well-known disadvantage of
preventing concentration of the dislocation density, that is instead allowed, for instance, by the choice $\f(\Curl\beta):=|\Curl \beta|$, recently proposed by  Conti and
Ortiz in \cite{CO}  (see also \cite{OR}).\\

The aim of this paper is to derive the strain gradient model \eqref{pladist2} for the stored energy, starting
from a basic model of discrete dislocations that accounts for the crystalline structure.  As a consequence
we will also get a  formula for the function $\varphi$, determined only by the elasticity tensor $\mathbb C$
and the Burgers vectors of the crystal.

A purely  discrete description of a crystal in the presence of dislocations can be given after introducing the
discrete equivalent of $\beta^p$ and  $\mu:=\Curl\beta^p$, following the approach  of  Ariza and Ortiz \cite{ArOr}.
That model is the starting point for a microscopic description of dislocations. In the passage from discrete to
continuum one can consider a intermediate (so to say) semi-discrete model known in literature as
{\it discrete dislocation model} (see also \cite{DNV}, \cite{CL}), in which the atomic scale is introduced as
an internal small parameter $\e$ referred to as the {\it core radius}.  The gap between the purely discrete
model and the discrete dislocation model can be usually filled through an interpolation procedure
(see  for instance \cite{Po}). In this paper,  to simplify matter,  we will adopt the discrete dislocation model.\\

In the discrete dislocation model a straight dislocation orthogonal to the cross section $\om$ is
identified with a point $x_0\in\om$ or, more precisely,  with a small region surrounding the dislocation
referred to as {\it core region}, i.e., a ball $B_\e(x_0)$,  being the core radius $\e$ proportional to
(the ratio between the dimensions of the crystal and) the underlying lattice spacing.
The presence of the dislocation  can be detected looking at the topological singularities of a continuum
strain field $\beta$, i.e.,
$$
\int_{\partial B_\e(x_0)}\beta \cdot t \, ds= \,b\,,
$$
where $t$ is the tangent to $\partial B_\e(x_0)$ and  $b$ is the Burgers vector of the given dislocation.

We will focus our analysis to the case when there may be many
dislocations crossing the domain $\om$. A generic distribution of
$N$ dislocations will be then identified with $N$ points
$\{x_i\}_{1,...,N}$ (or equivalently with the corresponding core
regions), each corresponding to some Burgers vectors $b_i$ belonging
to a finite set of {\it admissible Burgers vectors} $S$,  depending
on the crystalline structure, e.g., for square crystals, up to a
renormalization factor, $S= \{ e_1,\, e_2,\,-e_1,\,-e_2\}$.

In our analysis we introduce a second small scale $\rho_\e>\!\!\!> \e$ (the {\it hard core radius}) at which
a cluster of dislocations  will be identified with a multiple dislocation. In mathematical terms this corresponds
to introducing
the  span $\mathbb{S}$ of $S$ on $\Z$ (i.e., the set of finite combinations of Burgers vectors with integer coefficients)
and to representing
a generic distribution of dislocations as a measure $\mu$ of the type
$$
\mu=\sum_{i=1}^{N}  \delta_{x_i} \xi_i, \qquad \xi_i \in \mathbb{S}\,,
$$
where the distance between the $x_i$'s is at least $2\rho_\e$.
The admissible strain fields $\beta$ corresponding to $\mu$ are defined outside  the core region,
namely in $\Om\setminus \cup_i B_{\e}(x_i)$, and   satisfy
\begin{equation}\label{adco0}
\int_{\partial B_\e(x_i)}\beta\cdot t\, ds= \xi_i\,,\qquad \hbox{for all }\  \ i=1,...,N\,.
\end{equation}
It is well-known (and it will be discussed in detail in the sequel) that  the plastic distortion due
to the presence of dislocations  decays as the inverse of
the distance from the dislocations. This justifies the use of the linearized elastic energy outside the
core region $\cup_i B_{\e}(x_i)$. On the other hand, considerations at a discrete level show that the
elastic energy stored in the core region can be neglected.
Therefore, the elastic energy corresponding to $\mu$ and $\beta$ is given by
\begin{equation}\label{ee}
E_\e(\mu,\beta):=\int_{\om_\e(\mu)}W(\beta)\, dx\,,
\end{equation}
where $\om_\e(\mu)=\om\setminus \cup_i B_{\e}(x_i)$. By minimizing
the elastic energy \eqref{ee} among all strains satisfying
\eqref{adco0}, we obtain the energy induced by the dislocation
$\mu$. Note that, in view of the compatibility condition
\eqref{adco0}, this residual energy is positive whenever $\mu\neq
0$.
\par
In the case $W(\beta)=|\beta|^2$, with $\beta$ a curl free vector field in $\om\setminus \cup_i B_{\e}(x_i)$, this model coincides (setting $\beta=\nabla\theta$, $\theta$ being the phase function) with the energy proposed by Bethuel, Brezis and Helein (see \cite{BBH}) to deduce the so-called renormalized energy in the study of vortices in superconductors. 
In the context of dislocations this choice of $W$ and $\beta$ corresponds to a model for screw dislocations in an anti-planar setting. A $\Gamma$-convergence result in this framework
has been studied in \cite{Po} in  the energy regime $E_\e\approx |\log\e|$ that corresponds
to a finite number of dislocations.

In the planar context of edge dislocations, that we consider here, the connection with the Ginzburg-Landau model for vortices is more formal and the analysis has specific difficulties due to the vectorial nature of the circulation constraint (\ref{adco0}) and the properties of the energy density $W$.
A first analysis of this model has been perfomed by Cermelli and Leoni in  \cite{CL} where they obtain an asymptotic expansion of the energy functionals $E_\e$ in (\ref{ee}), in the case of a fixed configuration of dislocations. In particular they deduce the corresponding renormalized energy representing the effect of the Peach-Koehler force.
\\

The purpose of this paper is  to consider the energy functional
$E_\e$ in \eqref{ee}, where the unknowns are the distribution of
dislocations and the corresponding admissible  strains,  and to
study its asymptotic behavior in terms of $\Gamma$-convergence
as the lattice spacing $\e$ tends to zero and the number of
dislocations $N_\e$ goes to infinity. Under a suitable condition on
the hard core scale, we can show that in the asymptotics the energy
$E_\e$ can be decomposed into two effects: the {\it self-energy},
concentrated in the hard core region, and the {\it interaction
energy}, diffused in the remaining part of $\om$. The general idea
is that  the $\Gamma$-limit  of the (rescaled) energy functionals
$E_\e$ as $\e$ goes to zero is given by the sum of these two
effects. For different regimes for the number of dislocations $N_\e$
the corresponding $\Gamma$-limit will exhibit  the dominance of one
of the two effects: the self-energy is predominant for $N_\e <\!\!<
|\log\e|$, while the interaction energy is predominant for
$N_\e>\!\!>|\log \e|$.  In a critical regime $N_\e \approx |\log \e|
$ the two effects will be balanced and the structure of the limiting
energy is the following
\begin{equation}\label{Gammalimit}
\int_{\Om} W(\beta) \, dx + \int_{\Om} \varphi\left(\frac{d\mu}{d|\mu|}\right) \, d|\mu|.
\end{equation}
The first term  is the elastic energy of the limiting rescaled strain and comes from the interaction energy.
The second term represents the plastic  energy and comes from the self-energy; it depends only on the rescaled dislocation
density $\mu=\Curl\beta$ through a  positively homogeneous of degree $1$ density function $\f$,  defined by a suitable
asymptotic cell problem formula.
The constraint $\Curl \beta = \mu$ comes from the admissibility condition \eqref{adco0} that the
admissible strains have to satisfy at a discrete level.

If $(\beta, \mu)$ is a configuration that makes the continuous
energy in \eqref{Gammalimit} finite,  we have  necessarily that
$\beta\sym$ belongs to $L^2(\Om;\Md)$, and $\mu$ has finite mass. A
natural question arises: does $\mu$ belong to the space
$H^{-1}(\om;\R^2)$? We give a positive answer to this problem
proving a Korn type inequality for fields whose $\curl$ is a
prescribed measure, based on a fine estimate for elliptic systems
with $L\sp 1$-data recently proved by  Bourgain,  Brezis and  Van
Schaftingen (see \cite{BB}, \cite{BV}). In particular we deduce that
concentration of dislocations on lines is permitted, accounting for
the presence of pattern formations observed in crystals such as {\it
dislocation walls}, while concentration on points is not permitted.
An additional feature of the limit energy is the anisotropy of the
self-energy density inherited from the anisotropic elastic tensor
and the class of the admissible Burgers vectors accounting for the
crystalline structure.

The idea of deducing continuous models by homogenization of lower
dimensional singularities has been used in other contexts. Our result is very much
in the spirit of earlier results on the asymptotic analysis for the
Ginzburg-Landau model for superconductivity as the number of
vortices goes to infinity, done by Jerrard and Soner in \cite{JS},
and by Sandier and Serfaty in \cite{SS} (see also \cite{SS2} and the references therein).

As for the study of dislocations, a similar analysis was done by
Garroni and Focardi (\cite{GF}) starting from a phase field model introduced by  Koslowski, Cuitino
and Ortiz (\cite{KCO}) and inspired by the Peierls-Nabarro model (see also \cite{GM2}).

The plan of the paper is the following. In Section \ref{Scalings} we compute heuristically the
asymptotic behavior of the self and the interaction energy in terms of the number of dislocations  $N_\e$ and the core radius $\e$.
In Section \ref{setting}
we introduce rigorously the mathematical setting  of the problem, defining the class
of admissible configurations of dislocations, the class  of the corresponding admissible strains
and the rescaled energy functionals.
In Section \ref{cell} we provide the asymptotic cell problem formulas defining the density
of the plastic energy $\f$.
In Section \ref{Kor} we prove a Korn type inequality for fields with prescribed $\curl$.
In Section \ref{critical}
we give our main result concerning the $\Gamma$-convergence of the energy functionals in
the critical regime $N_\e\approx |\log\e|$, while
Section \ref{sub} and Section \ref{sup} are devoted to the sub critical and to the super critical
case respectively.

\section{Heuristic for the scaling}\label{Scalings}
In this section we identify the energy regimes, as the internal
scale $\e$ tends to zero, of the elastic energy induced by a
distribution of dislocations $\mu_\e:=\sum_{i=1}^{N_\e} \xi_i
\delta_{x_i} $, where we omit the dependence on $\e$ of the $x_i$'s.
The idea is that if the scaling of the  energy of a given sequence
of distributions of dislocations is assigned,  this provides a bound
for $N_\e$. In this section, in order to identify the relevant
energy regimes, we  do the converse: we assume to know the number of
dislocations $N_\e$ present in the crystal, and we  compute
(heuristically) the corresponding energy behavior as $\e$ goes to
zero.

As already mentioned in the introduction, the general idea is that
the energy will be always given by the sum of two terms: the {\it
self-energy}, concentrated in a small region (the {\it hard core
region}) surrounding the dislocations, and the {\it interaction
energy}, diffused in the remaining part of the domain $\Om$. We will
first introduce and compute heuristically the asymptotic behavior of
these two terms separately. Then we will  identify the leading term
(which depends on the behavior of $N_\e$) between these two
energies.

\subsection{Self-energy}
The self-energy of a distribution of dislocations $\mu_\e:=\sum_{i=1}^{N_\e} \xi_i \delta_{x_i}$ is
essentially given by the sum of the energies that would be induced by a single
dislocation $\mu_\e^i:= \xi_i \delta_{x_i}$, $ i=1, \ldots N_\e$  (i.e., if
no other dislocations  $\xi_j \delta_{x_j}, \, j\neq i$ were present in the crystal).
As we will see, the self-energy of a single dislocation $b \delta_{x}$ is asymptotically
(as $\e \to 0$) independent of the position of the dislocation point $x$ in $\Om$,
depending only on the elasticity tensor $\C$
and on the Burgers vector $b$.
\par
We begin by computing the self-energy in a very simple situation. We
assume for the time being that $\Om$ is the ball $B_1$ of radius one
and center zero,  that $\mu_\e\equiv \mu= b \, \delta_0 $, with
$|b|=1$, and we compute the energy of an admissible strain $\beta$
considering the toy energy given by the square of the $L^2$ norm of
$\beta$. More precisely, we consider
\begin{equation}\label{screw}
E_\e^{\text{toy}}(\mu,\beta):= \int_{B_1\setminus B_\e} |\beta|^2\,
dx\,,
\end{equation}
and the self-energy induced by $\mu$ given by minimizing this  energy among
all admissible strains $\beta$ compatible with $\mu_\e$, i.e., satisfying
$$
\int_{\partial B_\e}\beta \cdot t \, ds= b.
$$
In this setting the minimum strain $\beta(\mu)$ can be computed explicitly,
and in polar coordinates it  is given by the expression
$$
\beta(\mu)(\theta, r):= \frac{1}{2\pi r} t(r,\theta) \otimes b,
$$
where $t(r,\theta)$ is the tangent unit vector to $\partial B_r(0)$ at the point
$(r,\theta)$.
We deduce the following expression for the self-energy
\begin{equation}\label{a}
E^{\text{self}}_\e:= \int_\e^1 \frac{1}{2\pi r} \, dr =
\frac{1}{2\pi} (\log 1 - \log \e).
\end{equation}
Hence, as $\e \to 0$ the self-energy of a single dislocation behaves like
$|\log\e|$.
\par
Notice that most of the self-energy is concentrated around a small
region surrounding the dislocation. Indeed,  fix $s>0$, and  compute
the energy stored in the ball $B_{\e^s}(0)$,
$$
\int_{B_{\e^s}(0)} |\beta(\mu)|^2 \,dx=\int_\e^{\e^s} \frac{1}{2\pi
r} \, dr = \frac{1}{2\pi}(\log(\e^s) - \log\e)=
\frac{1}{2\pi}(1-s)|\log\e|=(1-s) E^{\text{self}}_\e.
$$
As we will see, in view of Korn inequality, the logarithmic behavior
and the concentration phenomenon of the self-energy  hold true also
in the case of elastic energies depending only on $\bs$ like in
\eqref{ee}. In this respect the position of the dislocation and the
shape of $\Om$ itself do not have  a big impact on the value of  the
self-energy. It seems then convenient to introduce the self-energy
as a quantity depending only on the Burgers vector $b$ (and the
elasticity tensor $\C$)  through a cell-problem. Before doing that,
we proceed heuristically, by introducing  the notion of  {\it hard
core} of the dislocation $\delta_0$, as a region surrounding the
dislocation such that
\begin{itemize}
\item[i)]
The hard core region contains almost all the self-energy;
\item[ii)]
The hard core region shrinks at the dislocation point as the atomic scale $\e$ tends to zero.
\end{itemize}
To this aim, in view of the previous discussion,  it is enough to define the  core
region as $B_{\rho_\e}(0)$, where the hard core radius $\rho_\e$ satisfies
\begin{itemize}
\item[i)]
$\lim_{\e\to 0} \frac{\rho_\e}{\e^s} = \infty$ for every fixed $0<s<1$;
\item[ii)]
$\rho_\e^2 \to 0$ as $\e\to 0$\,.
\end{itemize}
Note that condition $i)$ is indeed equivalent to
\begin{itemize}
\item[i')]
$\lim_{\e\to 0}\frac{\log(\rho_\e)}{\log\e} = 0 $.
\end{itemize}
A direct consequence of our definitions is that the self-energy  can
be identified (up to lower order terms) with the elastic energy
stored in the hard core region, for which, with a little abuse, we
will use the same notation, i.e.,
\begin{equation}\label{eself}
E_\e^{\text{self}}:=\int_{B_{\rho_\e}(0)} |\beta(\mu)|^2\,dx.
\end{equation}

Consider next the case of
a generic configuration of dislocations $\mu_\e:=\sum_{i=1}^{N_\e} b \, \delta_{x_i} $
in $\Om$ (but for sake of simplicity we will keep the Burgers vector equal to $b$ and the toy elastic energy \eqref{screw}).
The  hard core region of $\mu_\e$ is given by the union of the balls
$B_{\rho_\e}(x_i)$.
\par
Again we require a decay for $\rho_\e$ and that the area of the hard core region tends to zero as $\e\to 0$, $N_\e\to \infty$, i.e.,
\begin{itemize}
\item[i)]
$\lim_{\e\to 0} \frac{\rho_\e}{\e^s} = \infty$ for every fixed $0<s<1$;
\item[ii)]
$N_\e \rho_\e^2 \to 0$ as $\e\to 0$.
\end{itemize}
Moreover, we require that the dislocations are separated by a distance $2\rho_\e$, i.e.,
\begin{itemize}
\item[iii)] the balls $B_{\rho_\e}(x_i)$ are pairwise  disjoint.
\end{itemize}
This condition motivates the name of the hard core region.
In view of this assumption, it is natural to identify the self-energy of $\mu_\e$
as the elastic energy stored in the hard core region.
Therefore, we set
$$
E^{\text{self}}_\e(\mu_\e):=\int_{\text{Hard Core}}
|\beta(\mu_\e)|^2 \,dx= \sum_{i=1}^{N_\e} \int_{B_{\rho_\e}(x_i)}
|\beta(\mu_\e)|^2\,dx.
$$
We expect each term of the sum in the right hand side to be asymptotically equivalent to
the self-energy of a single dislocation  introduced in \eqref{eself}, so that in view of \eqref{a},
\begin{equation}
E_\e^{\text{self}}(\mu_\e)= \sum_{i=1}^{N_\e}
\int_{B_{\rho_\e}(x_i)} |\beta(\mu_\e)|^2\,dx
\, \approx \, N_\e|\log\e|.
\end{equation}
This expression represents the asymptotic behavior of the
self-energy as $\e\to 0$, $N_\e\to\infty$.

\subsection{Long range interaction between dislocations}
Since the self-energy is concentrated  in a small region surrounding
the dislocations, it is natural to define the  {\it interaction
energy} $E_\e^{\text{inter}}$ as the energy diffused in the
remaining part of the domain, far from each dislocation. In view of
this definition we have that the total energy
$E_\e(\mu_\e,\beta(\mu_\e))$ is given by the sum of the self and the
interaction energy:
\begin{align*}
E_\e(\mu_\e,\beta(\mu_\e))=\int_{\om_\e(\mu_\e)} W(\beta(\mu_\e))\,
dx= \int_{\text{Hard Core}}W(\beta(\mu_\e)\,dx \\+
\int_{\om\setminus \text{Hard Core}} W(\beta(\mu_\e))\,dx=
E_\e^{\text{self}}(\mu_\e) + E_\e^{\text{inter}}(\mu_\e)\,,
\end{align*}
where we recall that $\om_\e(\mu_\e)$ is the region of $\om$ outside
the dislocation cores, i.e.,
$\om_\e(\mu_\e):=\om\setminus\cup_{i=1}^{N_\e}B_\e(x_i)$, for
$\mu_\e=\sum_{i=1}^{N_\e}\delta_{x_i} b$. Let us compute
heuristically the interaction energy for a very simple configuration
of dislocations and for the toy energy considered in \eqref{screw}.
Let  $\Om$ be the unit ball $B_1$, and $\mu_\e$ be a configuration
of $N_\e$ periodically distributed dislocations whose Burgers vector
$b$ has modulus one. The stored interaction energy is given by
$$
E_\e^{\text{inter}}(\mu_\e,\beta(\mu_\e)):=\int_{\om\setminus
\text{Hard Core}} |\beta(\mu_\e)|^2 \,dx= \int_0^1 dr\,
\int_0^{2\pi r}\chi_{\e}|\beta(r,\theta)|^2 \, d\theta\,,
$$
where $\chi_{\e}$ denotes the characteristic function of the set
$\om\setminus \text{Hard Core}$. Thanks to the uniform distribution
of the $x_i$'s in $B_1=\om$ we deduce that the number of
dislocations contained in each ball of radius $r$ is proportional to
the area of such a ball, and more precisely is of order $\pi r^2
N_\e$. Therefore, the average $\overline \beta_t(r)$ of the
tangential component
of the strain  on each circle $\partial B_r$ outside the dislocation hard cores,
i.e., of $\beta_t(r,\theta)\chi_\e$, is of order  $\pi r^2 N_\e/2\pi r = N_\e r/2$.
The error is small thanks to the fact that the area of the hard cores region is negligible.
By Jensen's inequality we obtain the following estimate from above for the interaction energy
\begin{equation*}
E_\e^{\text{inter}} (\mu_\e) = \int_0^1  \int_0^{2\pi
r}\chi_\e|\beta(r,\theta)|^2 \, d\theta\, dr \ge \int_0^1   2\pi r
|\overline \beta_t(r)|^2 dr\approx \int_0^1   2\pi  N_\e^2 r^3/4
dr\,= C N_\e^2,
\end{equation*}
where $C$ is a constant independent of $\e$.
\par
Note that the self-energy has been estimated looking at the circulation condition that
the strain $\beta_\e(\mu)$ has to satisfy (in order to be an admissible strain) on
the circles $\partial B_r(x_i)$ for $\e\le r\le \rho_\e$, while the estimate for the
interaction energy has been established looking at the circulation condition on all circles $\partial B_r$ for $r \le 1$.
We will see that this estimate is indeed sharp for recovery sequences in $\Gamma$-convergence.
We will then conclude that the interaction energy behaves  like $N_\e^2$ as $\e\to 0$, $N_\e\to\infty$.

\subsection{Energy regimes}
In view of the heuristic arguments of the previous section we fix a function $\e \mapsto N_\e$
that represents the number of dislocations present in the crystal corresponding to the internal scale $\e$.
The above  considerations can be summarized in three regimes for the behavior of $N_\e$ with respect to $\e\to 0$:
\begin{itemize}
\item[1.] {Dilute dislocations ($N_\e <\!\!< |\log\e|$)}: in this regime we have
that the self-energy, which is of order $N_\e |\log\e|$, is predominant with respect to the interaction energy.
\item[2.] {Critical regime ($N_\e \approx |\log\e|$)}: in this regime we have that the self-energy
and the interaction energy are both of order
$N_\e |\log\e|\approx |\log\e|^2$.
\item[3.] {Super-critical regime ($N_\e >\!\!> |\log\e|$)}: in this regime the interaction energy,
which is of order $|N_\e|^2$, is  predominant with respect to the self-energy.
\end{itemize}

\section{The setting of the problem}\label{setting}
In this section we specify the mathematical setting  of the problem.
In particular, we introduce the class $X_\e$ of admissible
configurations of dislocations, the class $\asmn(\mu)$ of the
corresponding admissible strains, and the rescaled energy
functionals $\F_\e$.
\par
From now on  $\Om$ is a bounded open subset of $\R^2$ with Lipschitz continuous boundary, representing
a horizontal section of an infinite cylindrical crystal. For the given crystal, we introduce the class
of Burgers vectors $S=\{b_1,...,b_s\}$. In what follows we assume that $S$ contains at least two (independent) vectors, so that
\begin{equation}\label{R2}
{\rm Span}_{\R} S=\R^2
\end{equation}
and this will imply that the function $\varphi$ in the energy
(\ref{Gammalimit}) is finite in whole of $\R^2$. The case of only
one Burgers vector is easier and it implies that $\varphi$ is finite
only on a one dimensional subspace of $\R^2$.
We denote by $\mathbb{S}$ the span of $S$ with integer coefficients ($\mathbb{S}={\rm  Span}_{\Z}S$),
i.e., the set of Burgers vectors for ``multiple dislocations".

As in Section \ref{Scalings}, $N_\e$  represents  the number of dislocations present in the crystal,
corresponding to the internal scale $\e$. We introduce also the sequence
$\rho_\e$  representing the radius of the hard core surrounding the dislocations, and we require
\begin{itemize}
\item[i)]
$\lim_{\e \to 0} \frac{\rho_\e}{\e_n^s} = \infty$ for every fixed $0<s<1$;
\item[ii)]
$\lim_{\e \to 0} |N_\e| \rho_\e^2 = 0$.
\end{itemize}
Condition i) says that the hard core region contains almost all the self-energy, while condition ii)
says that the area of the hard core region tends to zero, and hence that its complement contains almost
all the interaction energy. Note that conditions i) and ii) are compatible whenever

\begin{equation}\label{ipn}
N_\e \e^s \to 0  \qquad \text{ for every fixed } s>0.
\end{equation}

\par
We assume that the distance between any pair of dislocation points is at least $2\rho_\e$ and
we define the class  $X_{\e}$ of admissible dislocations by
\begin{multline}\label{xn}
X_{\e}:= \Big\{\mu\in \ms(\om;\R^2): \, \mu=\sum_{i=1}^M \xi_i \, \delta_{x_i}, \, M\in\N,\, B_{{\rho_\e}}(x_i) \subset \Om,\\
|x_j-x_k| \ge 2\rho_\e
\text{ for every }
j \neq k, \xi_i \in \mathbb{S}  \Big\}\,,
\end{multline}
where $\ms(\om;\R^2)$ denotes the space of vector valued Radon
measures. Given $\mu \in X_{\e}$ and $r\in\R$, we define
\begin{equation}\label{defoe}
\Om_{r}(\mu):= \Om \setminus \bigcup_{x_i \in \, \text{supp}(\mu)} \overline B_{{r}}(x_i),
\end{equation}
where $B_{{r}}(x_i)$ denotes the open ball of center $x_i$ and radius ${r}$.
\par
The class  of {\it admissible strains}
associated with any $\mu:= \sum_{i=1}^M \xi_i\, \delta_{x_i}\in X_\e $ is then
\begin{equation*}
\left\{
\beta \in L^2(\Om_\e(\mu);\M^{2\times 2}): \,
\Curl \beta = 0 \text{ in } \Om_{{\e}}(\mu)
\text{ and }
\int_{\partial B_\e(x_i)}\beta\cdot t\, ds= \xi_i\text{ for all} \
i=1,...,M\, \right\}.
\end{equation*}
Here $t$ denotes the tangent to $\partial B_{\e}(x_i)$
and the
integrand  $\beta \cdot t$ is intended in the sense of traces (see Theorem 2,
page 204 in \cite{DL} and Remark \ref{curlbeta} for more details).
In the arguments below it will be useful to extend the admissible strains to the whole of $\om$.
There are various extensions that can be considered and that are compatible with the model that
we have in mind. We decide to extend the $\beta$'s setting their value to be zero  in the dislocation cores.
Thus, from now on the class $\asmn(\mu)$ of {\it admissible strains}
associated with any $\mu:= \sum_{i=1}^M \xi_i \, \delta_{x_i} $ is given by
\begin{eqnarray}\label{adms}
\asmn(\mu):= \displaystyle\Big\{
\beta \in L^2(\Om;\M^{2\times 2}): \, \beta\equiv 0 \text { in } \Om\setminus \Om_{{\e}}(\mu),
\Curl \beta = 0 \text{ in } \Om_{{\e}}(\mu), \\
\nonumber \displaystyle \int_{\partial B_\e(x_i)}\beta\cdot t\, ds=
\xi_i,\ \text{and}\  \int_{\om_\e(\mu)}(\beta-\beta^T)\,dx=0 \text{
for all } i=1,...,M\, \Big\}.
\end{eqnarray}
In view of the definition of the elastic energy, the last condition in the definition of $\asmn(\mu)$
is not restrictive and it is there to guarantee uniqueness of the minimizing strain.

\begin{remark}\label{curlbeta}
{\rm
Let $\beta \in \asmn(\mu)$.
Formally $\Curl \beta$ is a tensor  defined by
$$
(\Curl \beta)_{i \, j \, l}:= \frac{\partial }{\partial {x_l}}\beta_{i \, j} - \frac{\partial }{\partial {x_j}}\beta_{i \, l},
$$
which by definition is antisymmetric with respect to the entries $j,\, l$. Therefore
$\curl \beta_{(i)}$ (where $\beta_{(i)}$ denotes the $i^{th}$ row of $\beta$) can be identified with the scalar
$$
\curl \beta_{(i)}:= \frac{\partial }{\partial x_1}\beta_{i \, 2}-\frac{\partial }{\partial x_2}\beta_{i \, 1}.
$$
In the sense of distribution, $\curl \beta_{(i)}$  is given by

\begin{equation}\label{disde}
<\curl \beta_{(i)}, \f> = \int_\Om <\beta_{(i)}, J \nabla \f>,
\end{equation}
where $J$ is the clockwise rotation of $90^{\text o}$.
From equation \eqref{disde} it turns out that whenever $\beta_{(i)}$ is in $L^2(\Om;\R^2)$,
then $\curl \beta_{(i)}$ is well defined for $\f\in H^1(\Om)$ and acts continuously on it; so that
$$
\curl \beta_{(i)}\in H^{-1}(\om;\R^2) \qquad \text{ for every } \beta\in \asmn(\mu).
$$
On the other hand, if
$\beta_{(i)}\in L^1(\Om;\R^2)$ and $\curl \beta_{(i)} \in H^{-1}(\om;\R^2)$
then $\beta_{(i)}$ is in $L^2(\Om;\R^2)$  modulo gradients.
\par
Finally,  notice that for every $\mu:=\sum_{i=1}^M \xi_i\, \delta_{x_i} \in X_{\e}$,
the  circulation condition in \eqref{adms} can be stated in the following equivalent way:
$$
<\Curl \beta, \f> = \sum_{i=1}^M   \xi_i \, c_i,
$$
for every $\f\in H_0^{1}(\Om)$ such that $\f\equiv c_i$ in
$B_{{\e}}(x_i)$. In particular, if $\f$ belongs also to
$C_0^0(\Om)$, we have
$$
<\Curl \beta, \f> \,  = \, \int_{\Om} \f \, d\mu.
$$
}
\end{remark}

\bigskip

The elastic energy $E_\e$ corresponding to a pair $(\mu,\beta)$, with $\mu\in X_{\e}$ and
$\beta \in \asmn(\mu)$, is defined  by
\begin{equation}\label{elaene}
E_\e(\mu,\beta):= \int_{\Om_{\e}(\mu)} W(\beta) \, dx\,,
\end{equation}
where
\begin{equation*}
W(\xi):=\frac{1}{2}  \C\xi:\xi
\end{equation*}
is the \emph{strain energy density}, and the elasticity tensor $\C$  satisfies
\begin{equation}\label{coercon2}
c_{1}|\xi^{\operatorname*{sym}}|^{2}\leq \C\xi:\xi \leq
c_{2}|\xi^{\operatorname*{sym}}|^{2} \qquad \text{ for any }
\xi\in\M^{2\times 2}\,,
\end{equation}
where $\xi^{\operatorname*{sym}}:=\frac12(\xi+\xi^{\top})$ and $c_1$
and $c_2$ are two given positive constants. Since $\beta$ is always
extended to zero outside $\om_\e(\mu)$, we can rewrite the energy as
$$
E_\e(\mu,\beta)= \int_{\Om} W(\beta) \, dx\,.
$$
Note that for any given $\mu\in X_\e$, the problem
$$
\min_{\beta\in \asmn(\mu)}\int_{\Om_\e(\mu)} W(\beta) \, dx\,
$$
has a unique solution. The proof of this fact can be obtained by taking of a finite number
of segments by $\Om_\e(\mu)$, obtaining in such a way a simply connected domain where to apply
Korn inequality to the curl-free admissible strains, and then following the direct method of calculus of variations.

We conclude giving the notation of the rescaled functionals  in the different regimes for $N_\e$,
according to the asymptotic analysis given in the previous section:
\begin{itemize}
\item[]{\it Sub-critical or dilute regime: $N_\e<\!\!< |\log\e|.$}
In this case the predominant contribution comes from the self-energy and is of  order $N_\e |\log{\e}|$. Therefore, we set
\begin{equation}\label{renefa1}
\Fdi_{\e}(\mu,\beta):=
\begin{cases}
\frac{1}{N_\e |\log{\e}|}  E_\e(\mu,\beta) & \text{ if } \mu\in X_{\e}, \, \beta \in \asmn(\mu),\\
\infty & \text{otherwise in }  L^2(\Om;\Md).
\end{cases}
\end{equation}
\item[]{\it Critical regime: $N_\e= |\log\e|.$}
In this case both energies are of  order $N_\e|\log{\e}|= |\log\e|^2$. Therefore, we set
\begin{equation}\label{renefa2}
\F_{\e}(\mu,\beta):=
\begin{cases}
\frac{1}{|\log{\e}|^2}  E_\e(\mu,\beta) & \text{ if } \mu\in X_{\e}, \, \beta \in \asmn(\mu),\\
\infty & \text{otherwise in }  L^2(\Om;\Md)
\end{cases}
\end{equation}
\item[]{\it Super-critical regime: $N_\e>\!\!> |\log\e|.$}
In this case the interaction  energy is the leading term, and is of  order $N_\e^2$. Therefore, we set
\begin{equation}\label{renefa3}
\Fsu_{\e}(\mu,\beta):=
\begin{cases}
\frac{1}{N_\e^2}  E_\e(\mu,\beta) & \text{ if } \mu\in X_{\e}, \, \beta \in \asmn(\mu),\\
\infty & \text{otherwise in }  L^2(\Om;\Md)
\end{cases}
\end{equation}
\end{itemize}


\begin{remark}\label{regmis}
{\rm The choice of representing the Curl constraint by defining the
measure $\mu$ as a sum of Dirac masses concentrated in the
dislocation points is one of the possible choices. Other
possibilities would be to consider more regular measures with the
same mass, such as
\begin{equation}\label{diffusa}
\tilde\mu=\sum_{i=1}^{M} \frac{\chi_{B_\e(x_i)}}{\pi\e^2}\xi(x_i)
\end{equation}
or
\begin{equation}\label{sulbordo}
\hat\mu=\sum_{i=1}^{M} \frac{\Hh^1\res\partial B_\e(x_i)}{2\pi\e}\xi(x_i)\,.
\end{equation}
The advantage of these alternative choices is that $\tilde\mu$ and
$\hat\mu$  belong to $H^{-1}(\Om; \R^2)$, which is the natural space
for $\Curl\beta$. Indeed,
since $\tilde \mu$ ($\hat \mu$ respectively) are in $H^{-1}(\Om;\R^2)$,
we could rewrite the class of admissible strains as follows:
\begin{equation*}
\left\{
\beta \in L^1(\Om;\M^{2\times 2}): \,
\Curl \beta = \tilde \mu \text{ (respectively $\hat \mu$) in } \Om \right\}.
\end{equation*}
This notion of admissible strains does not coincide with that given in \eqref{adms},
but it turns out to be equivalent to \eqref{adms}, in terms of $\Gamma$-convergence,
in the study of the asymptotic behavior of the energy functionals as $\e\to 0$.
}
\end{remark}

\section{Cell formula for the self-energy}\label{cell}
The self-energy stored in a neighborhood of a dislocation is of order $|\log \e|$ and,
in view of the concentration of the energy,  it is asymptotically  not affected
by the shape of the domain, depending only on the elasticity tensor $\C$ and on the Burgers vector $b$.
It seems then natural to introduce rigorously the notion of self-energy through a cell problem.
\par
In the following we will consider the self-energy of any  multiple Burgers vector $\xi \in \mathbb S$.
For convenience we will introduce all the quantities we need for a generic vector $\xi\in\R^2$.
\par
For every $\xi \in \R^2$ and for every $0<r_1< r_2 \in\R$, let
\begin{equation}\label{adcell}
\asm_{r_1,r_2}(\xi):= \left\{
\beta \in L^2(B_{r_2}\setminus B_{r_1};\M^{2\times 2}):
\Curl \beta = 0, \int_{\partial B_{r_1}}\beta \cdot t\, ds= \xi
\right\}\,,
\end{equation}
where $B_r$ denotes the ball of radius $r$ and center $0$.

We first note that the admissibility conditions above on a strain $\beta$, assure an a priori
bound from below for its energy. This is made precise by the following remark.

\begin{remark}\label{min-prob}
{\rm Given $0<r_1<r_2$ and $\xi\in\R^2$, for every admissible
configuration $\beta\in \asm_{r_1,r_2}(\xi)$ we have
$$
\int_{B_{r_2}\setminus B_{r_1}}|\beta\sym|^2 \, dx\geq c|\xi|^2\,,
$$
where the constant $c$ depends only on $r_1$ and $r_2$.
\par
Indeed, by introducing a cut with a segment $L$, the set $(B_{r_2}\setminus B_{r_1})\setminus L$ becomes simply connected. Since $\Curl \beta=0$
in $(B_{r_2}\setminus B_{r_1})\setminus L$, there exists a function $u\in H^1(B_{r_2}\setminus B_{r_1};\R^2)$ such that $\nabla u=\beta$
in $(B_{r_2}\setminus B_{r_1})\setminus L$. By the classical Korn's inequality applied to $u$ we  obtain
$$
\int_{B_{r_2}\setminus B_{r_1}} |\beta-A|^2\,dx\leq C \int_{B_{r_2}\setminus B_{r_1}}|\beta\sym|^2 \, dx,
$$
for some skew symmetric matrices  $A$. Moreover, by the fact that
$\beta\in \asm_{r_1,r_2}(\eta_n)$, we conclude
\begin{eqnarray*}
\int_{B_{r_2}\setminus B_{r_1}} |\beta-A|^2\,dx&\geq&\int_{r_1}^{r_2}\frac{{1}}{2\pi\rho}\left| \int_{\partial B_{\rho}}(\beta -A)\cdot t\, ds  \right|^2d\rho\\
&=&\int_{r_1}^{r_2}\frac{{1}}{2\pi\rho}\left| \int_{\partial B_{\rho}}\beta \cdot t\, ds  \right|^2d\rho=\int_{r_1}^{r_2}\frac{1}{2\pi\rho}|\xi|^2d\rho=\frac{|\xi|^2\log \frac{r_2}{r_1}}{2\pi}\,.
\end{eqnarray*}
}
\end{remark}

\bigskip

Set now $
C_\e:= B_1 \setminus  B_{\e},
$
and let $\psi_\e: \R^2\to \R$ be the function defined through the following cell problem
\begin{equation}\label{psie}
\psi_\e(\xi):= \frac{1}{|\log\e|}\min_{\beta\in \asm_{\e,1}(\xi)} \int_{C_\e}
W(\beta) \, dx \qquad \text{ for every } \xi \in \R^2.
\end{equation}

By \eqref{coercon2}, it is easy to see that problem \eqref{psie} has a solution.
We will denote by $\beta_\e(\xi)$ the (unique) solution  of the cell problem \eqref{psie} whose average is a symmetric matrix.
Note that $\beta_\e(\xi)$ satisfies the boundary value problem
\begin{equation}\label{sdc}
\left\{
\begin{array}{lll}
\text{Div} \, \C \beta_\e(\xi) &= 0 & \text{ in } C_\e,\\
\C \beta_\e(\xi) \cdot \nu &= 0 & \text{ on } \partial C_\e.
\end{array}
\right.
\end{equation}

Moreover, we will denote by
$\beta_{\R^2}(\xi):\R^2 \to \R^2$  the planar strain defined in all $\R^2$ corresponding to the dislocation centered in $0$
with Burgers vector
$\xi$. The strain $\beta_{\R^2}(\xi)$ is of the type
\begin{equation}\label{betar2}
\beta_{\R^2}(\xi)(r,\theta)= \frac{1}{r} \Gamma_{\xi}(\theta),
\end{equation}
where the function $\Gamma$ depends on the elastic properties of the crystal, namely on the elasticity tensor $\C$,
and $\beta_{\R^2}(\xi)$ is a solution of the equation (we refer the reader to \cite{BBS} for a detailed treatment of the subject)
\begin{equation}\label{sdcp}
\left\{
\begin{array}{lll}
\Curl \beta_{\R^2}(\xi) &= \xi \delta_0  &\text{ in } \R^2,\\
\Div \C \beta_{\R^2}(\xi) &= 0  &\text{ in } \R^2.
\end{array}
\right.
\end{equation}

In the next lemma we will see that in the cell formula \eqref{psie}
we can assign  a suitable boundary condition without affecting the
asymptotic behavior of the energy  $\psi_\e(\xi)$.

\begin{lemma}\label{datoalb}
Let $\xi\in\R^2$ be fixed and let $ \hat \beta \in \asm_{1,\e}(\xi)$. Assume that $\hat \beta$ satisfies
\begin{equation}\label{decayeq0}
|\hat \beta(x)| \le \frac{K|\xi|}{|x|} \qquad \text{ for every } x\in C_\e \text{ and for some } K>0.
\end{equation}
Then there exists a function $\tilde \beta \in
\asm_{1,\e}(\xi)$ such that
\begin{itemize}
\item[1)] $\tilde \beta$ coincides with $ \hat \beta$ in a neighborhood of $\partial C_\e$;
\item[2)] $\displaystyle\int_{C_\e} W(\tilde \beta) \, dx \le \int_{C_\e} W(\beta_\e(\xi)) \, dx + C|\xi|^2$, where $C>0$
depends only on $K$.
\end{itemize}
\end{lemma}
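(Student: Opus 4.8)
The plan is to leave the cell minimizer $\beta_\e(\xi)$ untouched in the bulk of $C_\e$ and to modify it only in two thin annular collars, one adjacent to $\partial B_1$ and one adjacent to $\partial B_\e$, so that the modified field agrees with $\hat\beta$ there. The structural observation that makes this possible is that $\hat\beta$ and $\beta_\e(\xi)$ are both curl free in $C_\e$ and both realize the same circulation $\xi$ on $\partial B_\e$; hence their difference has vanishing circulation around the hole and is a genuine (single valued) gradient, i.e. there is $w\in H^1(C_\e;\R^2)$ with $\nabla w=\hat\beta-\beta_\e(\xi)$. Adding a cutoff multiple of this gradient will not disturb admissibility.

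Concretely, introduce two cutoffs: $\eta_{\rm out}$ equal to $1$ for $|x|\ge 1/2$ and $0$ for $|x|\le 1/4$, and $\eta_{\rm in}$ equal to $1$ for $|x|\le 2\e$ and $0$ for $|x|\ge 4\e$, with $|\nabla\eta_{\rm out}|\le C$ and $|\nabla\eta_{\rm in}|\le C/\e$. Denoting by $c_{\rm out}$, $c_{\rm in}$ the averages of $w$ over the two transition annuli, set
\[
\tilde\beta:=\beta_\e(\xi)+\nabla\big(\eta_{\rm out}(w-c_{\rm out})\big)+\nabla\big(\eta_{\rm in}(w-c_{\rm in})\big).
\]
Since only gradients of single valued functions are added, $\Curl\tilde\beta=0$ in $C_\e$ and the circulation on $\partial B_\e$ is unchanged, so $\tilde\beta\in\asm_{\e,1}(\xi)$; and near $\partial B_1$ (where $\eta_{\rm out}=1$, $\eta_{\rm in}=0$) and near $\partial B_\e$ (where $\eta_{\rm in}=1$, $\eta_{\rm out}=0$) one has $\tilde\beta=\beta_\e(\xi)+\nabla w=\hat\beta$, which is item 1).

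For the energy, expand $W=\tfrac12\C\,\cdot:\cdot$ and use that the two correction fields have disjoint supports. The crucial cancellation is the cross term $\int_{C_\e}\C\beta_\e(\xi):\nabla V$ with $V=\eta_{\rm out}(w-c_{\rm out})+\eta_{\rm in}(w-c_{\rm in})$: integrating by parts and invoking the Euler--Lagrange system \eqref{sdc}, namely $\Div\C\beta_\e(\xi)=0$ in $C_\e$ and $\C\beta_\e(\xi)\cdot\nu=0$ on $\partial C_\e$, makes it vanish identically. Hence
\[
\int_{C_\e}W(\tilde\beta)\,dx=\int_{C_\e}W(\beta_\e(\xi))\,dx+\int_{C_\e}W\big(\nabla(\eta_{\rm out}(w-c_{\rm out}))\big)\,dx+\int_{C_\e}W\big(\nabla(\eta_{\rm in}(w-c_{\rm in}))\big)\,dx,
\]
so it remains to bound each correction energy by $C|\xi|^2$. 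By \eqref{coercon2} and the Leibniz rule each is controlled by $\int\eta^2|\nabla w|^2+\int|\nabla\eta|^2|w-c|^2$; the first piece is $\int_{\rm collar}|\hat\beta-\beta_\e(\xi)|^2$, and the second is handled by the Poincar\'e inequality on the transition annulus, whose constant is $O(1)$ on the outer collar and $O(\e)$ on the inner one, exactly compensating $|\nabla\eta_{\rm in}|^2\sim\e^{-2}$. Everything thus reduces to the single estimate $\int_{\rm collar}|\hat\beta-\beta_\e(\xi)|^2\le C|\xi|^2$ on the two collars.

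The main obstacle is precisely this collar estimate near the inner boundary, where $\hat\beta$ and $\beta_\e(\xi)$ are both of size $|\xi|/\e$. For $\hat\beta$ the decay hypothesis \eqref{decayeq0} gives $\int_{\{\e\le|x|\le 4\e\}}|\hat\beta|^2\le 2\pi K^2|\xi|^2\log 4$ and $|\hat\beta|\le 2K|\xi|$ on the outer collar, so the $\hat\beta$ contribution is $O(|\xi|^2)$ on both collars. The corresponding control of $\beta_\e(\xi)$ is the heart of the matter: I will obtain it by comparison with the explicit whole space field $\beta_{\R^2}(\xi)=\Gamma_\xi(\theta)/r$ of \eqref{betar2}, which satisfies $|\beta_{\R^2}(\xi)(x)|\le C|\xi|/|x|$ and hence carries $O(|\xi|^2)$ energy on each dyadic collar. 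Writing $\beta_\e(\xi)-\beta_{\R^2}(\xi)=\nabla h$ (again curl free with vanishing circulation) and using that $\Div\C\nabla h=0$ with Neumann data inherited from \eqref{sdc} and \eqref{sdcp}, the bound $\int_{\rm collar}|\beta_\e(\xi)-\beta_{\R^2}(\xi)|^2\le C|\xi|^2$ follows from the decay/comparison estimate for the cell minimizer. This dyadic decay of $\beta_\e(\xi)$ --- each annular scale carrying only $O(|\xi|^2)$ of energy, the logarithm in $\psi_\e(\xi)$ arising solely from the number of such scales --- is the step I expect to require the most care, and it is where the explicit structure of $\beta_{\R^2}(\xi)$ together with elliptic regularity for \eqref{sdc} enters.
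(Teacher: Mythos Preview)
Your argument is correct and is essentially the paper's proof, with two cosmetic simplifications: you work with the single-valued potential $w$ of the \emph{difference} $\hat\beta-\beta_\e(\xi)$ (whose circulation vanishes), avoiding the cut $L$ the paper introduces to write $\beta_\e(\xi)=\nabla u$ and $\hat\beta=\nabla v$ separately; and you kill the cross term via the Euler--Lagrange equation \eqref{sdc}, whereas the paper simply observes $\tilde\beta=\beta_\e(\xi)$ on the bulk $C_\e^3$ and bounds the collar energy directly. The decisive collar estimate for $\beta_\e(\xi)$ is obtained in both arguments the same way, by comparison with the explicit whole-plane field $\beta_{\R^2}(\xi)$ of \eqref{betar2}--\eqref{sdcp} and the Green's function decay \eqref{stime}.
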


\begin{proof}
As in Remark \ref{min-prob} we make the annulus  $C_\e$ a simply
connected domain by introducing a cut with a segment $L$ and we
denote it by $\tilde C_\e:= C_\e \setminus L$. Since both $\hat
\beta$ and $\beta_\e(\xi)$ are curl free in $\tilde C_\e$, we have
$$
\beta_\e(\xi) = \nabla u, \qquad \hat \beta = \nabla v \qquad \text{ for some } u,\, v\in H^1(\tilde C_\e;\R^2).
$$

We want to construct $\tilde \beta$ as the gradient of a convex combination
of $u$ and $v$. To this purpose,
let $\f_1:(\e,1/3)\to\R$ be a piecewise affine function defined by
$$
\f_1(r):=
\begin{cases}
0 & \text{ for } r \in   (\e, 2 \e),\\
1 & \text{ for } r \in   (3\e, 1/3),\\
\f_1 \text{ is linear } & \text{ in } (2\e,3\e)\\
\end{cases}
$$
and let $\f_2: [1/3,1)\to\R$ be defined by
$$
\f_2(r):=
\begin{cases}
1 & \text{ for } r = 1/3,\\
0 & \text{ for } r \in    [2/3, 1),\\
\f_2 \text{ is linear } & \text{ in } (1/3, 2/3).
\end{cases}
$$
Define
$$
C_\e^1:= \{x\in C_\e: 2\e\le |x| \le 3\e \}, \qquad C_\e^2:=
\{x\in C_\e: 1/3 \le |x| \le 2/3 \}.
$$
Finally, set
$$
\tilde \beta(x):=
\begin{cases}
\nabla \big( \f_1(|x|) (u(x)-c_1(u)) + (1-\f_1(|x|)) (v(x)-c_1(v))\big) & \text{ for } |x|\le 1/3,\\
\nabla \big( \f_2(|x|) (u(x)-c_2(u)) + (1-\f_2(|x|)) (v(x)-c_2(v)) \big)
& \text{ for } |x| > 1/3,
\end{cases}
$$
where $c_i(f)$ denotes the mean value of the function $f$ on
$C_\e^i$. It is easy to check that by construction $\tilde \beta$
belongs to $\asm_{1,\e}(\xi)$ and coincides with $\hat\beta$ in a
neighborhood of $\partial C_\e$. It remains to prove property $2)$.
\par
By construction we have that $\tilde \beta$ coincides  with $\beta_\e(\xi)$  in the region
$$
C_\e^3:= \{x\in C_\e: 3\e \le |x| \le 1/3 \}.
$$
Therefore, to conclude the proof of property $2)$ it is enough to compute the energy stored in  $C_\e \setminus C_\e^3$.

In view of \eqref{coercon2}, of the fact that $||\nabla
\f_1||_\infty \le C/\e$, and of the following Poincar\'e inequality
$$
\int_{(B_{3r}\setminus B_{2r})\setminus L} |u-c(u)|^2 \, dx \le c r^2
\int_{(B_{3r}\setminus B_{2r})\setminus L} |\nabla u|^2 \, dx,
$$
where $r>0 $ and $c(u)$ stands for the average of $u$ over the
domain, it is easy to check that property $2)$ holds provided the
following estimates are established
\begin{equation}\label{fole}
\int_{C_\e\setminus C_\e^3} |\nabla u|^2 \,dx\le C|\xi|^2, \qquad
\int_{C_\e\setminus C_\e^3} |\nabla v|^2 \,dx\le C|\xi|^2,
\end{equation}
for some $C$ independent of $\e$.
\par
It remains to prove \eqref{fole}. By \eqref{decayeq0} it
follows that
\begin{equation}\label{agle}
\int_{C_\e\setminus C_\e^3} |\nabla v|^2 \, dx \le C |\xi|^2\left (
\int_\e^{2\e} \frac{1}{t} \,dt + \int_{2/3}^1 \frac{1}{t} \,dt
\right) \le C|\xi|^2.
\end{equation}

Concerning $u$, in view of \eqref{sdc} and \eqref{sdcp}, we have that
$\nabla u$ can be written as
\begin{equation}\label{cbw}
\nabla u= \beta_{\R^2}(\xi) + \nabla h,
\end{equation}
with $h$ given by
$$
\nabla h(x)= -  \int_{\partial B_1(0) \cup \partial B_{\e}(0)}
\nabla G(x,y) \,  \C \beta_{\R^2}(\xi) \cdot \nu(y) \, dy \qquad
\text{ for } x\in \tilde C_\e.
$$
Here $G(x,y)$ is the Green function corresponding to the elasticity tensor $\C$,
satisfying the equation
$$
-\Div_x \C \nabla_x G(x,y) = \delta_{y} I \qquad \text{ in } \R^2
$$
for every fixed $y$. It is well-known (see \cite{BBS}) that

\begin{equation}\label{stime}
|\nabla G(x)| \le C/|x| \quad\text{ for every }\quad x\in\R^2\,,
\end{equation}
for some constant $C>0$.
From \eqref{cbw} and \eqref{stime} it follows as in \eqref{agle} that
$$
\int_{C_\e\setminus C_\e^3} |\nabla u|^2 dx\le 2 \int_{C_\e\setminus
C_\e^3} \left(|\beta_{\R^2}(\xi)|^2 + |\nabla h|^2\right)\,dx \le
C|\xi|^2,
$$
which concludes the proof of \eqref{fole} and therefore of the lemma.
\end{proof}

From Lemma \ref{datoalb}, together with (\ref{betar2}), we deduce
the following corollary

\begin{corollary}\label{coro}
There exists a  constant $C>0$ such that for every $\xi\in \R^2$,
\begin{equation}\label{}
\psi_\e(\xi) \le \frac{1}{|\log\e|} \int_{C_\e}
W(\beta_{\R^2}(\xi))\,dx \le \psi_\e(\xi) + \frac{C
|\xi|^2}{|\log\e|}.
\end{equation}
In particular,  as $\e\to 0$ the functions $\psi_\e$ converge pointwise  to the function $\psi: \R^2\to \R$, defined by
\begin{equation}\label{pcsf}
\psi(\xi):=\lim_{\e\to 0 }\psi_\e(\xi) =
\frac{1}{|\log\e|}\lim_{\e\to 0 }\int_{C_\e} W(\beta_{\R^2}(\xi))
\,dx= \int_{\partial B_1(0)} W(\Gamma_{\xi}(\theta)) \, d\theta.
\end{equation}
More precisely,  we have
\begin{equation}\label{mip}
|\psi_\e(\xi)- \psi(\xi)| \le \frac{C|\xi|^2}{|\log \e|}.
\end{equation}

\end{corollary}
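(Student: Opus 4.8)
The plan is to reduce the assertion to a single two-sided estimate by first noting that the middle quantity does not in fact depend on $\e$. Using the explicit form \eqref{betar2} of $\beta_{\R^2}(\xi)$, passing to polar coordinates, and using that $W$ is positively $2$-homogeneous (it is a quadratic form), one computes
\begin{equation*}
\int_{C_\e} W(\beta_{\R^2}(\xi))\,dx=\Big(\int_\e^1\frac{dr}{r}\Big)\int_{\partial B_1(0)}W(\Gamma_{\xi}(\theta))\,d\theta=|\log\e|\,\psi(\xi),\qquad \psi(\xi):=\int_{\partial B_1(0)}W(\Gamma_{\xi}(\theta))\,d\theta.
\end{equation*}
Thus $\frac{1}{|\log\e|}\int_{C_\e}W(\beta_{\R^2}(\xi))\,dx=\psi(\xi)$ for every $\e$, and the corollary becomes equivalent to the estimate $\psi_\e(\xi)\le\psi(\xi)\le\psi_\e(\xi)+C|\xi|^2/|\log\e|$; from this, \eqref{mip}, the pointwise convergence $\psi_\e\to\psi$ and the identification \eqref{pcsf} all follow at once.

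The lower bound $\psi_\e(\xi)\le\psi(\xi)$ is immediate once we observe that $\beta_{\R^2}(\xi)$ restricted to $C_\e$ is admissible in the cell problem \eqref{psie}: it lies in $L^2(C_\e;\M^{2\times2})$ since $\e>0$, it is curl free in $C_\e$ because by \eqref{sdcp} its $\Curl$ equals $\xi\delta_0$, supported at the origin, and $\int_{\partial B_\e}\beta_{\R^2}(\xi)\cdot t\,ds=\xi$ by integrating that same relation over $B_\e$. Hence $|\log\e|\,\psi_\e(\xi)=\int_{C_\e}W(\beta_\e(\xi))\,dx\le\int_{C_\e}W(\beta_{\R^2}(\xi))\,dx=|\log\e|\,\psi(\xi)$.

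For the upper bound I would apply Lemma \ref{datoalb} with $\hat\beta:=\beta_{\R^2}(\xi)|_{C_\e}$. Its hypothesis \eqref{decayeq0} holds with $K$ independent of $\xi$: the system \eqref{sdcp} is linear in $\xi$, so $\Gamma_{\xi}$ depends linearly on $\xi$, whence $|\Gamma_{\xi}(\theta)|\le K|\xi|$ uniformly in $\theta$ and $|\beta_{\R^2}(\xi)(x)|\le K|\xi|/|x|$. The lemma produces $\tilde\beta\in\asm_{\e,1}(\xi)$ coinciding with $\beta_{\R^2}(\xi)$ near $\partial C_\e$ and with $\int_{C_\e}W(\tilde\beta)\,dx\le\int_{C_\e}W(\beta_\e(\xi))\,dx+C|\xi|^2=|\log\e|\,\psi_\e(\xi)+C|\xi|^2$, $C=C(K)$. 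It then suffices to show $\int_{C_\e}W(\beta_{\R^2}(\xi))\,dx\le\int_{C_\e}W(\tilde\beta)\,dx$, i.e. that $\beta_{\R^2}(\xi)$ minimizes the elastic energy among the fields of $\asm_{\e,1}(\xi)$ with its own boundary trace. Since $\tilde\beta$ and $\beta_{\R^2}(\xi)$ are curl free in the annulus $C_\e$ with the same circulation around the hole, their difference is an exact gradient $\nabla w$ with $w\in H^1(C_\e;\R^2)$, and $\nabla w\equiv0$ near $\partial C_\e$, so $w$ is a constant $c_1$ near $\partial B_\e$ and a constant $c_2$ near $\partial B_1$. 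Expanding the quadratic form, $W(\tilde\beta)=W(\beta_{\R^2}(\xi))+\C\beta_{\R^2}(\xi):\nabla w+W(\nabla w)$; the last term is nonnegative by \eqref{coercon2}, and integrating the cross term by parts and using $\Div\C\beta_{\R^2}(\xi)=0$ in $C_\e$ (from \eqref{sdcp}) leaves $(c_2-c_1)\cdot\int_{\partial B_1}\C\beta_{\R^2}(\xi)\,\nu\,ds$, which vanishes because $\Div\C\beta_{\R^2}(\xi)=0$ holds on all of $\R^2$ and hence $\C\beta_{\R^2}(\xi)$ carries no net traction resultant on any circle around the origin. Chaining the three inequalities and dividing by $|\log\e|$ yields $\psi(\xi)\le\psi_\e(\xi)+C|\xi|^2/|\log\e|$, completing the proof.

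The step I expect to require the most care is this last minimality argument for $\beta_{\R^2}(\xi)$: one must verify that $\tilde\beta-\beta_{\R^2}(\xi)$ is a genuine single-valued gradient that is locally constant on each boundary circle, so that the cross term really collapses to a traction resultant, and then that this resultant is indeed zero — which is exactly the content of the second line of \eqref{sdcp} (the dislocation, unlike a concentrated force, produces no point source of stress). The scaling identity for $\int_{C_\e}W(\beta_{\R^2}(\xi))\,dx$ and the verification of the hypotheses of Lemma \ref{datoalb} are routine once the linearity of $\xi\mapsto\beta_{\R^2}(\xi)$ is invoked.
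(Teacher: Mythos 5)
Your proof is correct and follows exactly the route the paper intends: the paper states the corollary as a direct consequence of Lemma \ref{datoalb} together with the explicit form \eqref{betar2}, and your write-up supplies precisely the omitted details — the exact identity $\frac{1}{|\log\e|}\int_{C_\e}W(\beta_{\R^2}(\xi))\,dx=\psi(\xi)$ by $2$-homogeneity, admissibility of $\beta_{\R^2}(\xi)|_{C_\e}$ for the first inequality, and Lemma \ref{datoalb} combined with the minimality of $\beta_{\R^2}(\xi)$ among fields with its own boundary trace (the cross term vanishing via $\Div\C\beta_{\R^2}(\xi)=0$ in all of $\R^2$, hence zero net traction on circles) for the second. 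No gaps; the points you flag as delicate (single-valuedness of $w$, local constancy near the two boundary circles, and the vanishing resultant) are handled correctly.
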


By means of a simple change of variable $\e\to \rho$ we have that the self-energy
is indeed concentrated in a $\rho$-neighborhood of the dislocation points whenever
$|\log\rho| <\!\!< |\log \e|$. The precise statement is given in the next proposition.

\begin{proposition}\label{equicell}
For every $\e>0$ let $\rho_\e>0$ be such that $\log \rho_\e/\log \e\to 0$
as $\e \to 0$. Let
$\bar\psi_{\e}: \R^2\to \R$ be defined through the following minimum problem
\begin{equation}\label{rhopsie0}
\bar \psi_{\e}(\xi):= \frac{1}{|\log\e|}
\min \Big\{
\int_{B_{\rho_\e}\setminus B_\e}
W(\beta) \, dx:
\beta \in \asm_{\e,\rho_\e}(\xi)  \Big\}.
\end{equation}
Then $\bar \psi_{\e} = \psi_{\e} (1 + o(\e))$, where $o(\e)\to 0$ as
$\e\to 0$ uniformly with respect to $\xi$. In particular, $\bar
\psi_{\e}$ converge pointwise as $\e\to 0$ to the  function $\psi:
\R^2\to \R$ given in Corollary \ref{coro}.
\par
Moreover, let  $\hat \beta \in \asm_{\e,\rho_\e}(\xi)$, $\xi\in\R^2$,  be such that
\begin{equation}\label{pallino}
|\hat \beta(x)|\le K\frac{|\xi|^2}{|x|}
\end{equation}
for some $K\in\R$, and let
$\tilde\psi_{\e}: \R^2\to \R$ be defined through the following minimum problem
\begin{equation}\label{rhopsie}
\tilde \psi_{\e}(\xi):= \frac{1}{|\log\e|}
\min \Big\{
\int_{B_{\rho_\e}\setminus B_\e}
W(\beta) \, dx:
\beta \in \asm_{\e,\rho_\e}(\xi), \,  \beta \cdot t = \hat \beta \cdot t \text{ on } \partial B_\e \cup \partial B_{\rho_\e} \Big\}.
\end{equation}
Then $\tilde \psi_{\e} = \psi_{\e} (1 + o(\e))$, where $o(\e)\to 0$
as $\e\to 0$  uniformly with respect to $\xi$. In particular,
$\tilde \psi_{\e}$ converge pointwise as $\e\to 0$ to the  function
$\psi: \R^2\to \R$ given in Corollary \ref{coro}.
\end{proposition}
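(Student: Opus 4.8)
The plan is to turn both claims into statements about a single rescaled annulus, $C_{\e/\rho_\e}=B_1\setminus B_{\e/\rho_\e}$, where the work already done — the quantitative convergence $\psi_\e\to\psi$ of Corollary \ref{coro} and the boundary-datum surgery of Lemma \ref{datoalb} — can be applied off the shelf.

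\emph{Step 1 (rescaling).} Given $\beta$ on $B_{\rho_\e}\setminus B_\e$, set $\gamma(y):=\rho_\e\,\beta(\rho_\e y)$ for $y\in C_{\e/\rho_\e}$. Since $\Curl$ picks up a factor $\rho_\e^2$, being curl free is preserved; since $W$ is positively $2$-homogeneous and $dy=\rho_\e^{-2}dx$, the energy is preserved,
\[
\int_{B_{\rho_\e}\setminus B_\e}W(\beta)\,dx=\int_{C_{\e/\rho_\e}}W(\gamma)\,dy;
\]
and the circulation $\int_{\partial B_\sigma}\gamma\cdot t\,ds$ is unchanged, the factor $\rho_\e$ in $\gamma$ being compensated by the length element, so $\int_{\partial B_{\e/\rho_\e}}\gamma\cdot t\,ds=\xi$. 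Hence $\beta\mapsto\gamma$ is a bijection of $\asm_{\e,\rho_\e}(\xi)$ onto $\asm_{\e/\rho_\e,1}(\xi)$, and by \eqref{psie}
\[
|\log\e|\,\bar\psi_\e(\xi)=\min_{\asm_{\e,\rho_\e}(\xi)}\int_{B_{\rho_\e}\setminus B_\e}W(\beta)\,dx=\min_{\asm_{\e/\rho_\e,1}(\xi)}\int_{C_{\e/\rho_\e}}W(\gamma)\,dy=|\log(\e/\rho_\e)|\,\psi_{\e/\rho_\e}(\xi).
\]
Under the same change of variables the extra constraint in \eqref{rhopsie} becomes $\gamma\cdot t=\hat\gamma\cdot t$ on $\partial B_{\e/\rho_\e}\cup\partial B_1$, where $\hat\gamma(y):=\rho_\e\hat\beta(\rho_\e y)$ still obeys a decay estimate of the form required by Lemma \ref{datoalb}, the factor $\rho_\e$ cancelling in \eqref{pallino}. (Note $\e/\rho_\e\to0$ and $\e<\rho_\e<1$ for small $\e$, since $|\log\rho_\e|=o(|\log\e|)$.)

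\emph{Step 2 (the formula for $\bar\psi_\e$).} From $\log\rho_\e/\log\e\to0$ we get $|\log(\e/\rho_\e)|=|\log\e|\,(1+o(\e))$, uniformly in $\xi$. Both $\psi_\e$ and $\psi$ are positively $2$-homogeneous (the cell problem and the field $\beta_{\R^2}(\xi)$ in \eqref{betar2} depend linearly on $\xi$), so to compare $\psi_{\e/\rho_\e}(\xi)$ and $\psi_\e(\xi)$ it suffices to do so on the unit circle. There \eqref{mip} gives $\psi_{\e/\rho_\e}(\xi)=\psi(\xi)+O(1/|\log(\e/\rho_\e)|)$ and $\psi_\e(\xi)=\psi(\xi)+O(1/|\log\e|)$ uniformly, while $\psi(\xi)\ge c_0|\xi|^2$ for a fixed $c_0>0$ — a quantitative version of Remark \ref{min-prob} together with \eqref{coercon2}. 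Dividing, $\psi_{\e/\rho_\e}(\xi)/\psi_\e(\xi)=1+o(\e)$ uniformly in $\xi\neq0$. Combining with Step 1, $\bar\psi_\e(\xi)=\psi_\e(\xi)\,(1+o(\e))$ with $o(\e)\to0$ uniformly in $\xi$; pointwise convergence $\bar\psi_\e\to\psi$ then follows from Corollary \ref{coro}.

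\emph{Step 3 (the constrained formula $\tilde\psi_\e$).} The admissible class of \eqref{rhopsie} is contained in $\asm_{\e,\rho_\e}(\xi)$, so $\tilde\psi_\e(\xi)\ge\bar\psi_\e(\xi)$; only the opposite inequality needs an argument. Transport \eqref{rhopsie} to $C_{\e/\rho_\e}$ via Step 1 and apply Lemma \ref{datoalb} with $\e$ replaced by $\e/\rho_\e$ and $\hat\beta$ by $\hat\gamma$ (legitimate by the decay bound carried over in Step 1): there is $\tilde\gamma\in\asm_{\e/\rho_\e,1}(\xi)$ coinciding with $\hat\gamma$ in a neighbourhood of $\partial C_{\e/\rho_\e}$ — hence $\tilde\gamma\cdot t=\hat\gamma\cdot t$ on $\partial B_{\e/\rho_\e}\cup\partial B_1$ — with
\[
\int_{C_{\e/\rho_\e}}W(\tilde\gamma)\,dy\le\int_{C_{\e/\rho_\e}}W(\beta_{\e/\rho_\e}(\xi))\,dy+C|\xi|^2=|\log(\e/\rho_\e)|\,\psi_{\e/\rho_\e}(\xi)+C|\xi|^2.
\]
Rescaling back, $\tilde\gamma$ produces an admissible strain for \eqref{rhopsie}, so $\tilde\psi_\e(\xi)\le\bar\psi_\e(\xi)+C|\xi|^2/|\log\e|$. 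Since $\bar\psi_\e(\xi)\ge c_0'|\xi|^2$ by Step 2, the error term is $o(\e)\,\bar\psi_\e(\xi)$, whence $\bar\psi_\e(\xi)\le\tilde\psi_\e(\xi)\le\bar\psi_\e(\xi)(1+o(\e))=\psi_\e(\xi)(1+o(\e))$ uniformly in $\xi$, and $\tilde\psi_\e\to\psi$ pointwise by Step 2.

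\emph{Main obstacle.} Everything is routine except for the uniformity in $\xi$ of the various $o(\e)$'s and, in Step 3, the requirement that the additive error $C|\xi|^2/|\log\e|$ be genuinely negligible against $\psi_\e(\xi)$. Both are handled by the homogeneity of $\psi_\e$ and $\psi$ together with the two–sided bound $c_0|\xi|^2\le\psi_\e(\xi)\le c_0'|\xi|^2$ uniform for small $\e$; the nontrivial half is the lower bound, which rests on the fact that Korn's constant for $B_1\setminus B_\e$ does not degenerate as $\e\to0$, as is implicit in the computation of Remark \ref{min-prob}.
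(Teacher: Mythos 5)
Your proof is correct and follows exactly the route the paper intends (it leaves the argument as the ``simple change of variable'' preceding the statement): rescale $B_{\rho_\e}\setminus B_\e$ to the unit-outer-radius cell $B_1\setminus B_{\e/\rho_\e}$, use $|\log(\e/\rho_\e)|=|\log\e|(1+o(1))$ together with the quantitative estimate \eqref{mip} and the bound $\psi(\xi)\ge c_0|\xi|^2$, and handle the boundary-datum version via Lemma \ref{datoalb}. Your error bookkeeping even reproduces the rate $o(\e)\approx\log\rho_\e/\log\e$ stated in Remark \ref{mp}, so no further comparison is needed.
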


\begin{remark}\label{mp}
{\rm
The error $o(\e)$ appearing in the expression of $\bar \psi$ and $\tilde \psi$ in Proposition \eqref{equicell}
can be estimated as follows
$o(\e)\approx \log \rho_\e/ \log \e$.
}
\end{remark}
We are now in a position to define the density $\varphi:\R^2\to[0,\infty)$ of the self-energy through
the following relaxation procedure
\begin{equation}\label{defi}
\varphi(\xi):= \inf \left \{ \sum_{k=1}^N \lambda_k \psi(\xi_k)\,: \ \sum_{k=1}^N \lambda_k
\xi_k = \xi\,,\  N\in\N\,, \ \lambda_k\geq 0\,, \ \xi_k\in \mathbb S \right \}.
\end{equation}
It follows from the definition that the function $\f$ is positively $1$-homogeneous and convex.
Moreover, since $\psi(\xi)\ge C|\xi|^2$ for some $C\ge 0$, (that can be checked by its very definition),
the $\inf$ in \eqref{defi} is indeed a minimum.

\begin{remark}\label{burgersR}
{\rm
Note that if for every $z_1, \, \ldots , z_s \in \Z$ we have
\begin{equation}\label{burgers}
\psi\left(\sum_{i=1}^s z_i b_i\right) \ge \sum_{i=1}^s z_i \psi(b_i),
\end{equation}
then in the relaxation procedure given in  \eqref{defi} we can replace $S$ with $\mathbb S$. More precisely,
the formula for $\f$ reduces to
\begin{equation}\label{defi2}
\varphi(\xi):= \min \left\{ \sum_{i=1}^s |\lambda_i| \psi(b_i): \,
\sum_{i=1}^s \lambda_i b_i = \xi, \, b_i \in {S} \right\}.
\end{equation}
Actually, condition \eqref{burgers} can be viewed as a condition that the class of Burgers vectors $b_i$
of the given crystal has to satisfy in order to contain all the dislocation's defects observed in the crystal.
In other words, if a dislocation
corresponding to a vector $b:=\sum_{i=1}^s z_i b_i$  stores an energy smaller than that obtained by separating
all its components $b_i$, then $b$ itself has to be considered as a Burgers vector of the crystal.
A rigorous mathematical  definition of the class of Burgers vectors corresponding to a given crystal could be
to consider the set $\{b_1, \ldots b_s\}\subset \mathbb S$ of all vectors satisfying $\psi(b_i)= \f(b_i)$,
where $\mathbb S$ is the set of slips under which the crystal is invariant. The Burgers vectors defined in such a
way would always satisfy property \eqref{burgers}.
}
\end{remark}

\section{A Korn type inequality for fields with prescribed curl}\label{Kor}
Let $u\in W^{1,2}(\Om;\R^2)$, $\beta$ its gradient and  $\beta\sym$ and $\beta\as$ its decomposition in symmetric
and anti-symmetric part. The classical Korn inequality asserts that if $\beta\as$ has zero mean value, then
its $L^2$ norm is controlled by the $L^2$ norm of $\beta\sym$. We will show that in dimension two the same
result holds true also for fields $\beta$ that are not curl free, modulo an error depending actually on
the mass of $\Curl \beta$. The result is a consequence of some estimates for elliptic systems with $L\sp 1$-data
recently proved by  Bourgain,  Brezis and  Van Schaftingen in \cite{BB}, \cite{BV}.
The precise statement is the following
\begin{theorem}[A Generalized Korn type inequality]\label{KTI}
There exists a constant $C$ depending only on $\Om$ such that for every  $\beta\in L^1(\Om;\Md)$  with
$$
\Curl \beta=\mu \in \ms(\Om;\R^2),  \qquad  \int_\Om \beta\as=0,
$$
we have
\begin{equation}\label{KTIE}
\int_{\Om} |\beta\as|^2 \, dx \le C\left( \int_{\Om} |\beta\sym|^2 \, dx + \big(|\mu|(\Om)\big)^2 \right).
\end{equation}
\end{theorem}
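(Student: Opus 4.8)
The plan is to reduce the matrix inequality to a scalar $L^2$-estimate for the (scalar) skew component of $\beta$, and to close that estimate by testing against a vector field supplied by the Bourgain--Brezis solvability theorem for $\operatorname{div}Y=f$, which produces a solution that is simultaneously in $W^{1,2}_0$ and in $L^\infty$; the $L^\infty$-bound is exactly what makes it possible to pair $Y$ against the \emph{measure} $\Curl\beta$.

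First, one may assume $\bs\in L^2(\Om;\Md)$, since otherwise the right-hand side of \eqref{KTIE} is infinite. To reduce to smooth $\beta$, for $\delta>0$ set $\Om_\delta:=\{x\in\Om:\operatorname{dist}(x,\partial\Om)>\delta\}$, mollify $\beta$ (extended by $0$ outside $\Om$) to $\beta_\delta:=\beta*\eta_\delta$ on $\Om_\delta$, and subtract a constant skew matrix $c_\delta J$ with $c_\delta\to0$ to restore the zero-average condition on $\Om_\delta$. Then $\Curl\beta_\delta=\mu*\eta_\delta$ on $\Om_\delta$ with $\|\mu*\eta_\delta\|_{L^1}\le|\mu|(\Om)$, while $(\beta_\delta)\sym\to\bs$ in $L^2_{\mathrm{loc}}(\Om)$ and $\|(\beta_\delta)\sym\|_{L^2(\Om_\delta)}\le\|\bs\|_{L^2(\Om)}$. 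If \eqref{KTIE} is established for smooth $\beta$ with a constant independent of $\delta$, then letting $\delta\to0$ (weak $L^2$-compactness together with lower semicontinuity of the $L^2$-norm) yields the general case. So assume henceforth that $\beta$ is smooth on $\Om$ and $\mu=\Curl\beta\in C^\infty\cap L^1$.

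Now write $\beta\as=aJ$, with $J$ the $90^{\circ}$ rotation, so that $|\beta\as|^2=2a^2$ and the hypothesis $\int_\Om\beta\as=0$ reads $\int_\Om a=0$. Writing $\bs=(s_{ij})$ and using the two scalar equations contained in $\Curl\beta=\mu$ (recall $\curl\beta_{(i)}=\partial_1\beta_{i2}-\partial_2\beta_{i1}$), a direct computation gives
\begin{equation}\label{KTIrel}
\nabla a=\mu+R,\qquad R:=(\,\partial_2 s_{11}-\partial_1 s_{12}\,,\ \partial_2 s_{12}-\partial_1 s_{22}\,),
\end{equation}
so that $R$ is a fixed first-order linear differential operator in $\bs$, whence $\|R\|_{H^{-1}(\Om;\R^2)}\le C\|\bs\|_{L^2(\Om;\Md)}$ with $H^{-1}=(H^1_0)^*$. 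Hence it suffices to prove $\|a\|_{L^2(\Om)}\le C\bigl(\|\bs\|_{L^2}+|\mu|(\Om)\bigr)$. Since $a\in L^2(\Om)$ has zero mean, the Bourgain--Brezis estimate provides a field $Y\in H^1_0(\Om;\R^2)\cap L^\infty(\Om;\R^2)$ with $\operatorname{div}Y=a$ in $\Om$ and $\|\nabla Y\|_{L^2(\Om)}+\|Y\|_{L^\infty(\Om)}\le C_\Om\|a\|_{L^2(\Om)}$. Using $Y=0$ on $\partial\Om$, \eqref{KTIrel}, and integration by parts,
\begin{equation*}
\|a\|_{L^2(\Om)}^2=\int_\Om a\,\operatorname{div}Y\,dx=-\int_\Om\nabla a\cdot Y\,dx=-\int_\Om \mu\cdot Y\,dx-\langle R,Y\rangle.
\end{equation*}
The first term is bounded by $\|Y\|_{L^\infty}\|\mu\|_{L^1(\Om)}\le C_\Om\|a\|_{L^2}\,|\mu|(\Om)$ (recall $\|\mu\|_{L^1}\le|\mu|(\Om)$ after the reduction), and the second by $\|R\|_{H^{-1}}\|\nabla Y\|_{L^2}\le C\|\bs\|_{L^2}\,C_\Om\|a\|_{L^2}$. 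Dividing by $\|a\|_{L^2(\Om)}$ gives the scalar bound, hence \eqref{KTIE}.

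The main obstacle is the Bourgain--Brezis input itself: in dimension two the equation $\operatorname{div}Y=f$ with $f\in L^2$ of zero mean admits a solution lying simultaneously in $W^{1,2}_0$ and in $L^\infty$ — the ``fine estimate for elliptic systems with $L^1$-data'' alluded to in the text, dual to the Van Schaftingen inequality bounding divergence-free $L^1$ vector fields in $\dot H^{-1}$. It is precisely this $L^\infty$-gain, false for a generic element of $H^1_0(\Om;\R^2)$, that allows one to test $Y$ against the measure $\Curl\beta$, which is the crux of the matter since the right-hand side of \eqref{KTIE} controls only $\bs$ in $L^2$. The remaining points are routine: the stability of the constant $C_\Om$ along $\Om_\delta\uparrow\Om$ (or, alternatively, first establishing \eqref{KTIE} on a disk and then passing to a general Lipschitz $\Om$), the bookkeeping of the vanishing corrections $c_\delta$, and the verification of the identity \eqref{KTIrel}.
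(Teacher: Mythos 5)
Your argument is correct in its core, but it is not the paper's proof: it is essentially the dual formulation of it. Both proofs start from the same identity (your relation $\nabla a=\mu+R$ is exactly the paper's system $(\beta\as_{12})_{x_i}=h_i+\mu_i$ with $h_i$ first order in $\bs$), but the paper then takes the curl of this identity and invokes the Brezis--Van Schaftingen estimate (an $L^1$ vector field whose divergence lies in $H^{-2}$ belongs to $H^{-1}$) to conclude that $\mu\in H^{-1}(\Om;\R^2)$ with $\|\mu\|_{H^{-1}}\le c\,(\|\bs\|_{L^2}+|\mu|(\Om))$, then builds a particular potential $\xi=J\nabla u$, $-\Delta u=(-\mu_2,\mu_1)$, with $\Curl\xi=\mu$ and $\|\xi\|_{L^2}$ controlled by $\|\mu\|_{H^{-1}}$, and closes with the classical Korn inequality applied to the curl-free field $\beta-\xi$. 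You instead never bound $\mu$ in $H^{-1}$ and never use Korn: you test the same identity against the Bourgain--Brezis solution $Y\in H^1_0\cap L^\infty$ of $\div Y=a$, which is the (pre)dual face of the same critical estimate. Your route is more economical (no potential $\xi$, no Korn, a purely scalar estimate), while the paper's route has the advantage that everything happens on the fixed domain $\Om$: the upgrade of $\beta\as$ from $L^1$ to $L^2$ is absorbed into the distributional Korn inequality for $\beta-\xi$, whereas you must regularize first, and this is where your ``routine'' closing remarks are thinner than you suggest. Your mollification on the inner sets $\Om_\delta$ needs the div-solvability constant to be uniform in $\delta$, and inner parallel sets of a Lipschitz domain are not automatically uniformly Lipschitz; the alternative you offer (prove the inequality on a disk and transfer to Lipschitz $\Om$) is not available as stated, because one cannot extend an $L^1$ field across $\partial\Om$ while controlling the mass of its $\Curl$ and the $L^2$ norm of its symmetric part without already knowing the conclusion. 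This is reparable -- a Lipschitz domain can be exhausted by uniformly Lipschitz subdomains obtained by pushing the boundary inward in the local charts, and the Bourgain--Brezis constant depends only on the Lipschitz character and diameter -- but it deserves an argument, not a parenthesis. Note finally that the mollification performs a second, silent service in your scheme: it replaces $\mu$ by an $L^1$ density, so that pairing the merely bounded field $Y$ with $\Curl\beta$ is unambiguous (for a measure charging Lebesgue-null sets the value of $\int Y\,d\mu$ would depend on the representative of $Y$); keep that step even if you streamline the domain approximation.
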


\begin{proof}
The condition $\Curl \beta=(\mu_1,\mu_2)$ can be written in the following form

$$
\left\{
\begin{array}{ll}
(\beta\as_{12})_{x_1} = & h_1 + \mu_1, \\
(\beta\as_{12})_{x_2} = & h_2 + \mu_2,
\end{array}
\right.
$$
where $h_i\in H^{-1}(\Om)$ are linear combinations of derivatives of entries of $\beta\sym$.

Since the field $\big ((\beta\as_{12})_{x_1}, (\beta\as_{12})_{x_2}\big)$ is curl free,
we deduce that $\curl (\mu_1,\mu_2)= \curl(-h_1,-h_2)$, or equivalently
\begin{equation}\label{div}
\text{div } (-\mu_2, \mu_1) = \text{div } (h_2, -h_1).
\end{equation}
By \cite[Lemma 3.3 and Remark 3.3]{BV} (see also \cite{V} and \cite{BB}) we have that if $f\in L^1(\Omega; \R^2)$
is a vector field satisfying ${\rm div}\, f\in H^{-2}(\Omega)$, then $f$ also belong to $H^{-1}(\Omega)$ and
the following estimate also holds
$$
\|f\|_{H^{-1}(\Om)} \le c \big( \|\text{div } f\|_{H^{-2}(\Om)}
+\|f\|_{L^1(\Omega)}  \big)\,.
$$
This result clearly extents by density to measures with bounded variations. Thus, by (\ref{div}),
we have that $\text{div } (-\mu_2, \mu_1)\in H^{-2}(\Omega)$, and so we deduce that $\mu$ belongs to $H^{-1}(\Omega)$ and
\begin{equation}\label{BVesti}
\|\mu\|_{H^{-1}(\Om)} \le c \big( \|\text{div } (h_2, -h_1)\|_{H^{-2}(\Om)} +|\mu|(\Om)  \big) \le c \big( \|\beta\sym\|_{L^{2}(\Om;\R^2)} +|\mu|(\Om) \big)\,.
\end{equation}

Now let $u\in H^1_0(\Om;\R^2)$ be the solution of $-\Delta
u=(-\mu_2,\mu_1)$ in $\Om$ and let $\xi$ be the $2\times 2$ matrix
defined by $\xi:=J\nabla u$ (i.e., the $i^{th}$ row of $\xi$ is
given by $\xi_i=(-(u_i)_{x_2}, (u_i)_{x_1})$, for $i=1,2$). By
definition we have that $\Curl \xi= \mu$.  By \eqref{BVesti} we then
obtain
\begin{equation}\label{xile}
\int_{\Om} |\xi|^2 \, dx = \|\nabla u\|^2_{L^2}\leq
c\|\mu\|_{H^{-1}(\Om)} \leq  \left(\int_{\Om} |\beta\sym|^2 \, dx +
\big(|\mu|(\Om)\big)^2 \right).
\end{equation}
Since the average of the anti-symmetric part of $\xi$ can be easily
estimated with the $L^2$ norm of $\xi$, we can assume that
(\ref{xile}) holds for a matrix $\xi$ such that $\xi\as$ has zero
mean value and $\Curl \xi=\mu$. Therefore, by the  classical Korn
inequality applied to $\beta - \xi$, which by construction is curl
free, and in view of \eqref{xile} we conclude
\begin{align*}
\int_{\Om} |\beta\as|^2 \, dx &\le c\left(\int_{\Om}
|\beta\as-\xi\as|^2 \, dx + \int_{\Om} |\xi\as|^2 \, dx \right)\\
&\le c\left(\int_{\Om} |\beta\sym-\xi\sym|^2
\, dx + \int_{\Om} |\xi\as|^2 \, dx \right)\\
&\le c\left( \int_{\Om} |\beta\sym|^2  \, dx +  \int_{\Om} |\xi|^2
\, dx  \right) \le c\left( \int_{\Om} |\beta\sym|^2 \, dx +
\big(|\mu|(\Om)\big)^2 \right).
\end{align*}
\end{proof}

\section{The critical regime ($N_\e \approx |\log\e|$)}\label{critical}
In this section we will study the asymptotic behavior of the
rescaled energy functionals  as the internal scale $\e \to 0$, in
the critical energy regime, namely with $N_\e=|\log \e|$. In terms
of $\Gamma$-convergence, we consider the rescaled energy functionals
$\F_\e:\ms(\om;\R^2)\times L^1(\Om;\Md)\to \R$ defined in
\eqref{renefa2}.
\par
According to the heuristic arguments above, in this regime we aspect  the coexistence of the two effects,
the interaction energy and the self-energy, so that the candidate for the  $\Gamma$-limit $\F:
\ms(\om;\R^2)\times L^2(\Om;\M^{2\times 2})\to \R$ is defined by
\begin{equation}\label{gali}
\F(\mu,\beta):=
\begin{cases}\displaystyle
\int_{\Om} W(\beta) \, dx + \int_{\Om} \varphi\left(\frac{d\mu}{d|\mu|}\right) \, d|\mu| & \text{ if } \mu\in H^{-1}(\om;\R^2), \, \Curl \, \beta = \mu;\\
\infty & \text{ otherwise in $L^2(\Om;\Md)$.}
\end{cases}
\end{equation}

\begin{theorem}\label{mainthm}
The following $\Gamma$-convergence result holds.
\begin{itemize}
\item[i)]{\bf Compactness.}
Let $\en\to 0$ and let $\{(\mu_n,\beta_n)\}$ be a sequence in
$\ms(\om;\R^2)\times L^2(\Om;\M^{2\times 2})$ such that
$\F_{\en}(\mu_n,\beta_n)\le  E$ for some positive constant $ E$
independent of $n$. Then there exist a subsequence of $\e_n$ (not
relabeled),  a measure $\mu\in H^{-1}(\om;\R^2)$, and a strain
$\beta\in L^2(\Om;\M^{2\times 2})$, with $\Curl \beta = \mu$, such
that
\begin{equation}\label{not1}
\frac{1}{|\log {\en}|}\mu_n \weakst \mu \qquad \hbox{in}\  \ms(\om;\R^2)\,,
\end{equation}
\begin{equation}\label{not2}
\frac{1}{|\log {\en}|} \beta_n \weak \beta \qquad \hbox{in}\
L^2(\Om;\M^{2\times 2})\,.
\end{equation}

\item[ii)]{\bf $\Gamma$-convergence.}
The functionals  $\F_{\e}$ $\Gamma$-converge to $\F$ as ${\e} \to 0$ with respect to
the convergence in \eqref{not1},  \eqref{not2}, i.e., the following inequalities hold.
\medskip

\begin{itemize}
\item[]{\bf $\Gamma$-liminf inequality:}
for every  $(\mu,\beta) \in \left(\ms(\om;\R^2)\cap H^{-1}(\om;\R^2)\right) \times L^2(\Om;\M^{2\times 2})$,
with $\Curl \beta = \mu$, and for every sequence
$(\mu_\e,\beta_\e) \in X_{\e}\times L^2(\Om;\M^{2\times 2})$
satisfying \eqref{not1} and \eqref{not2}, we have
$$
\liminf_{\e\to 0}\F_{\e}(\mu_\e,\beta_\e) \ge \F(\mu,\beta).
$$
\item[]{\bf $\Gamma$-limsup inequality:}
given $(\mu,\beta)\in \left(\ms(\om;\R^2)\cap H^{-1}(\om;\R^2)\right) \times L^2(\Om;\M^{2\times 2})$,
with $\Curl \beta = \mu$, there exists
$(\mu_\e,\beta_\e)\in \ms(\om;\R^2)\times L^2(\Om;\M^{2\times 2})$ satisfying \eqref{not1} and \eqref{not2},  such that
$$
\limsup_{\e\to 0}\F_{\e}(\mu_\e,\beta_\e) \le \F(\mu,\beta).
$$
\end{itemize}
\end{itemize}
\end{theorem}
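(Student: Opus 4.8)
\textbf{Proof strategy for Theorem \ref{mainthm}.}

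The plan is to prove the three assertions in order: compactness, the $\Gamma$-$\liminf$ inequality, and the $\Gamma$-$\limsup$ inequality.

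\emph{Compactness.} Suppose $\F_{\en}(\mu_n,\beta_n)\le E$, so that $\mu_n\in X_{\en}$, $\beta_n\in\asme_n(\mu_n)$ and $\int_\Om W(\beta_n)\,dx\le E|\log\en|^2$. First I would bound the mass of $\mu_n$. Writing $\mu_n=\sum_i\xi_i^n\delta_{x_i^n}$ and using that the hard-core balls $B_{\rho_{\en}}(x_i^n)$ are disjoint, I would estimate the energy stored in each annulus $B_{\rho_{\en}}(x_i^n)\setminus B_{\en}(x_i^n)$ from below by the cell quantity $\bar\psi_{\en}(\xi_i^n)$ of Proposition \ref{equicell}, which is $\ge c\,|\xi_i^n|^2\,|\log\en|$ for $\en$ small by \eqref{pcsf} and the coercivity of $\psi$. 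Summing gives $\sum_i|\xi_i^n|^2\le C E|\log\en|$, hence $|\mu_n|(\Om)\le\sum_i|\xi_i^n|\le C\sqrt{N_\en}\sqrt{E|\log\en|}=C\sqrt{E}\,|\log\en|$ since $N_\en=|\log\en|$. Thus $\mu_n/|\log\en|$ has uniformly bounded mass and, up to a subsequence, converges weakly-$*$ to some $\mu\in\ms(\Om;\R^2)$. For the strains, the energy bound and \eqref{coercon2} give $\|\beta_n\sym\|_{L^2}\le C|\log\en|$, and the Generalized Korn inequality (Theorem \ref{KTI}) applied to $\beta_n$ — which satisfies $\int_\Om\beta_n\as=0$ and $\Curl\beta_n$ equal (in the distributional sense of Remark \ref{curlbeta}) to a measure of mass $O(|\log\en|)$ — yields $\|\beta_n\as\|_{L^2}\le C|\log\en|$; hence $\beta_n/|\log\en|$ is bounded in $L^2(\Om;\Md)$ and, along a further subsequence, converges weakly to some $\beta$. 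Finally $\Curl(\beta_n/|\log\en|)\to\Curl\beta$ in distributions, while it also converges to $\mu$ as a measure, so $\Curl\beta=\mu$; Theorem \ref{KTI} (or equation \eqref{BVesti}) then shows $\mu\in H^{-1}(\Om;\R^2)$.

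\emph{$\Gamma$-$\liminf$ inequality.} Let $(\mu_\e,\beta_\e)$ satisfy \eqref{not1}--\eqref{not2}; I may assume $\liminf\F_\e<\infty$. The idea is to split $\Om$ into the hard-core region $H_\e=\bigcup_iB_{\rho_\e}(x_i^\e)$ and its complement, writing
\begin{equation*}
\F_\e(\mu_\e,\beta_\e)=\frac1{|\log\e|^2}\int_{\Om\setminus H_\e}W(\beta_\e)\,dx+\frac1{|\log\e|^2}\sum_i\int_{B_{\rho_\e}(x_i^\e)}W(\beta_\e)\,dx.
\end{equation*}
For the first (interaction) term, since $\mathcal L^2(H_\e)\le C N_\e\rho_\e^2\to0$ by hypothesis ii) on $\rho_\e$, we have $\chi_{\Om\setminus H_\e}\beta_\e/|\log\e|\weak\beta$ in $L^2$, so by weak lower semicontinuity of $\beta\mapsto\int_\Om W(\beta)\,dx$ (convex, since $\C\ge0$ on symmetric matrices) this term has $\liminf\ge\int_\Om W(\beta)\,dx$. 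For the second (self-energy) term, on each ball $\int_{B_{\rho_\e}(x_i^\e)}W(\beta_\e)\,dx\ge\int_{B_{\rho_\e}(x_i^\e)\setminus B_\e(x_i^\e)}W(\beta_\e)\,dx\ge |\log\e|\,\bar\psi_\e(\xi_i^\e)\ge|\log\e|\bigl(\psi(\xi_i^\e)-o(1)|\xi_i^\e|^2\bigr)$ by Proposition \ref{equicell} and the cell-problem lower bound (using that $\beta_\e$ restricted to the annulus is admissible for \eqref{rhopsie0}). Summing and recalling $N_\e=|\log\e|$, the self-energy term is bounded below by $\frac1{|\log\e|}\sum_i\psi(\xi_i^\e)-o(1)$. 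To pass to the limit I would introduce the auxiliary measures $\nu_\e:=\frac1{|\log\e|}\sum_i\psi(\xi_i^\e)\,\delta_{x_i^\e}$ (scalar, nonnegative, uniformly bounded), pass to a weak-$*$ limit $\nu$, and use a blow-up / lower-semicontinuity argument for $1$-homogeneous functionals on measures (in the spirit of Reshetnyak) together with the definition \eqref{defi} of $\varphi$ as the relaxation of $\psi$ over the admissible Burgers vectors: because several dislocations near a point can cluster into one effective vector, the correct lower bound for the localized limit density is exactly $\varphi\bigl(d\mu/d|\mu|\bigr)$. This gives $\liminf\ge\int_\Om W(\beta)\,dx+\int_\Om\varphi(d\mu/d|\mu|)\,d|\mu|=\F(\mu,\beta)$.

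\emph{$\Gamma$-$\limsup$ inequality.} Given $(\mu,\beta)$ with $\Curl\beta=\mu\in H^{-1}$, I would construct the recovery sequence in two stages. First, approximate: using the density of finite sums of weighted Dirac masses with Burgers vectors in $\mathbb S$ that are $\varphi$-optimal (the minimum in \eqref{defi}/\eqref{defi2} is attained), reduce to the case $\mu=\sum_{k}c_k\delta_{y_k}$ with $c_k\in\mathbb S$ and $\beta$ smooth away from the $y_k$; a diagonal argument, using lower semicontinuity of the target along this approximation, then handles the general case. For such a $\mu$: set $N_\e=|\log\e|$ and distribute, near each $y_k$, roughly $\lceil|\log\e|\,|c_k|/\sum_j|c_j|\rceil$ dislocations with unit-size Burgers vectors $b_i\in S$ chosen so that their sum equals $c_k$ and realizes $\varphi(c_k)$, placed on a fine grid inside a small ball around $y_k$ with mutual distance $\ge2\rho_\e$ (possible since $N_\e\rho_\e^2\to0$); this defines $\mu_\e\in X_\e$ with $\mu_\e/|\log\e|\weakst\mu$. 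For the strain, take $\beta_\e:=|\log\e|\,\beta^{\mathrm{far}}_\e+\beta^{\mathrm{core}}_\e$ where $\beta^{\mathrm{far}}_\e$ interpolates $\beta$ outside the hard cores and $\beta^{\mathrm{core}}_\e$ is, near each dislocation, the optimal cell strain $\beta_\e(b_i)$ of \eqref{psie} (extended by $\beta_{\R^2}(b_i)$ and cut off via Lemma \ref{datoalb} / Proposition \ref{equicell} so that it matches the far field on $\partial B_{\rho_\e}$), with $\beta_\e\equiv0$ on the $\e$-cores. Then $\beta_\e\in\asme(\mu_\e)$ and $\beta_\e/|\log\e|\weak\beta$ in $L^2$. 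Computing the energy: the contribution from $\Om\setminus H_\e$ converges to $\int_\Om W(\beta)\,dx$ (the interaction energy is quadratic in the slowly varying far field, and the cross terms with $\beta^{\mathrm{core}}$ vanish because $\beta^{\mathrm{core}}$ decays like $1/|x|$, cf. the heuristic Section \ref{Scalings}); the contribution from $H_\e$ is $\sum_i\int_{B_{\rho_\e}(x_i^\e)}W(\beta_\e)\,dx=\sum_i|\log\e|\,\psi_\e(b_i)(1+o(1))+O(N_\e)=|\log\e|\sum_i\psi(b_i)+o(|\log\e|^2)$, which after division by $|\log\e|^2$ and by the choice of the $b_i$ tends to $\sum_k\varphi(c_k)=\int_\Om\varphi(d\mu/d|\mu|)\,d|\mu|$. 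Summing gives $\limsup\F_\e(\mu_\e,\beta_\e)\le\F(\mu,\beta)$.

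\textbf{Main obstacle.} The delicate point is the $\Gamma$-$\liminf$: showing that dislocation clusters at a fixed scale $\rho_\e$ cannot beat the relaxed density $\varphi$ while simultaneously the diffuse part of $\beta_\e$ produces the full elastic energy $\int_\Om W(\beta)\,dx$. This requires a careful separation of scales — decoupling the self-energy (lives on $B_{\rho_\e}(x_i^\e)$) from the interaction energy (lives on $\Om\setminus H_\e$) — and a lower-semicontinuity argument for the $1$-homogeneous self-energy term on varying discrete measures, where the relaxation \eqref{defi} is exactly what makes the localized liminf correct; controlling the interaction of nearby dislocations (at distances between $2\rho_\e$ and the macroscopic scale) so that their joint contribution is still at least the sum of single self-energies, up to negligible error, is the crux.
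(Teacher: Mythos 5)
Your $\Gamma$-liminf argument is essentially the paper's (split into hard cores and complement, lower bound by the cell energies $\bar\psi_\e$, pass from $\psi$ to its relaxation $\f\le\psi$ and use Reshetnyak-type lower semicontinuity of the $1$-homogeneous convex density), and that part is fine. The other two parts contain genuine gaps.

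\emph{Compactness.} You apply Theorem \ref{KTI} directly to $\beta_n$, claiming that $\Curl\beta_n$ is ``a measure of mass $O(|\log\en|)$''. This is unjustified: since $\beta_n$ is extended by zero inside the cores, $\Curl\beta_n$ is the tangential-trace distribution carried by the circles $\partial B_{\en}(x_{i,n})$; it need not be a measure at all, and even when it is, its total variation is not controlled by $\sum_i|\xi_{i,n}|$ --- only the circulations are prescribed, and Remark \ref{curlbeta} identifies $\Curl\beta_n$ with $\mu_n$ only against test functions that are constant on the cores. The paper explicitly flags this (``it is not clear that $|\Curl\beta_n|\le C|\log\en|$'') and resolves it by modifying $\beta_n$ inside each core: subtracting the explicit kernel $K_{i,n}$ on the annulus $B_{2\en}(x_{i,n})\setminus B_{\en}(x_{i,n})$, applying the classical Korn inequality there, extending into the ball, and obtaining a field $\tilde\beta_n$ with $|\Curl\tilde\beta_n|(\Om)=|\mu_n|(\Om)$ to which Theorem \ref{KTI} applies; the bound for $\beta_n$ then follows from the normalization of its antisymmetric average. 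Your identification of the limit ($\Curl\beta=\mu$) suffers from the same issue and must be run with test functions constant on the cores, as in Step 3 of the paper. A smaller slip: your Cauchy--Schwarz step $\sum_i|\xi_{i,n}|\le\sqrt{N_{\en}}\bigl(\sum_i|\xi_{i,n}|^2\bigr)^{1/2}$ presumes the number of atoms equals $N_{\en}$, which is not part of the definition of $X_\e$; the paper instead uses that $|\xi_{i,n}|$ is bounded away from zero on $\mathbb S\setminus\{0\}$, so $\sum_i|\xi_{i,n}|\le C\sum_i|\xi_{i,n}|^2$.

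\emph{$\Gamma$-limsup.} The reduction to atomic measures $\mu=\sum_k c_k\delta_{y_k}$ cannot work. Such measures lie outside the domain of $\F$: a Dirac mass is not in $H^{-1}(\om;\R^2)$ and no $\beta\in L^2(\Om;\Md)$ can have $\Curl\beta=c_k\delta_{y_k}$ (a nonzero point circulation forces a $1/r$ far field, which is not square integrable), so $\F=+\infty$ there --- indeed the paper stresses that concentration on points is forbidden in the limit. A density/diagonal argument for the limsup requires energy convergence $\F(\mu_j,\beta_j)\to\F(\mu,\beta)$ along the approximation (not the ``lower semicontinuity'' you invoke, which goes the wrong way), and this fails badly here. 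Quantitatively, clustering $\sim|\log\e|$ dislocations of total Burgers content $\approx|\log\e|\,|c_k|$ inside a fixed small ball produces an unavoidable interaction energy of order $|\log\e|^2$ in the region outside the cluster (the circulation around the cluster is of order $|\log\e|\,|c_k|$), a contribution your bookkeeping omits; after rescaling it is a positive constant that blows up as the cluster radius shrinks. The paper's route is the opposite: approximate by absolutely continuous measures with (locally) constant density, spread the $N_\e$ dislocations uniformly at mutual distance $\sim N_\e^{-1/2}$ (Lemma \ref{recovery}), glue optimal cell strains with matched boundary data to the far field $|\log\e|\beta$ via the correctors $\tilde K_\e^{\mu_\e}$ and $R_\e$ to ensure admissibility, then localize to piecewise constant densities, and finally conclude for general $\mu$ by mollification plus an elliptic correction, using continuity of $\F$ along approximations with $|\mu_n|(\om)\to|\mu|(\om)$.
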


\begin{remark}\label{minibe}
{\rm
Consider the functionals $\cE_\e(\mu):=\min_\beta \F_\e(\mu,\beta)$, representing the elastic energy
induced by the dislocation measure $\mu$.
By the $\Gamma$-convergence result stated in Theorem \ref{mainthm} we immediately deduce that the
functionals $\cE_\e$  $\Gamma$-converge (as $\e\to 0$) to the functional $\cE: \ms(\om;\R^2)\cap H^{-1}(\om;\R^2)\to \R$
defined by
\begin{equation}\label{edimu2}
\cE(\mu):= \min_\beta \{\F(\mu, \beta): \Curl \beta = \mu \}.
\end{equation}
Therefore, the energy $\cE(\mu)$ induced by a distribution of
dislocations $\mu$ in the critical regime is given by the sum of
both an elastic  and a plastic term. In particular,  any
distribution of dislocations in this regime induces a  residual
elastic distortion (i.e., if $\mu$ is not zero, then so is the
corresponding strain $\beta$ that minimizes \eqref{edimu2} and hence
its elastic energy).}
\end{remark}

\subsection{Compactness}
Let $\{(\mu_n,\beta_n)\}$ be a sequence in $\ms(\om;\R^2)\times L^2(\Om;\M^{2\times 2})$ such that
$\F_{\en}(\mu_n,\beta_n)\le  E$ for some positive constant $ E$ independent of $n$.
We give the proof of the compactness property stated in Theorem \ref{mainthm} in three steps.
\vskip5pt
\noindent

{\it Step $1.$ Weak compactness of the rescaled
dislocation measures.}
\vskip3pt
We first show that the sequence $\{(1/{|\log \en|})\mu_n\} $ is uniformly bounded in mass. Let
$\mu_n=\sum_{i=1}^{M_n}  \xi_{i,n} \delta_{x_{i,n}} $, with $\xi_{i,n}\in \mathbb{S}$; we claim that
\begin{equation}\label{liminma}
\frac{1}{|\log \en|} |\mu_n|(\Om)= \frac{1}{|\log \en|} \sum_{i=1}^{M_n} |\xi_{i,n}| \le C.
\end{equation}
Indeed,
$$
E \ge \F_{\en}(\mu_n,\beta_n) = \frac{1}{|\log \en|^2} \int_{\Om_{\e}(\mu_n)} W(\beta_n) \,dx\ge \sum_{i=1}^{M_n}
\frac{1}{|\log \en|^2} \int_{B_{\rho_{\e_n}}(x_{i,n})} W(\beta_n) \, dx\,,
$$
where we recall that $\beta_n=0$ in $\Omega\setminus\Om_\e(\mu_n)$.
After a change of variables we deduce
$$
E \ge \sum_{i=1}^{M_n} \frac{1}{|\log \en|^2} \int_{B_{\rho_{\en}}} W(\beta_n(x_{i,n}+ y)) \, dy.
$$
Note that the functions $y\to \beta_n(x_{i,n}+ y)$ belong to the class
$\asm_{\en,\rho_{\en}}(\xi_{i,n})$ defined in \eqref{adcell}. Therefore we have
\begin{multline}\label{linu}
E \ge \sum_{i=1}^{M_n} \frac{1}{|\log \en|^2} \int_{B_{\rho_{\en}}} W(\beta_n(x_i+ y)) \, dy \\
\ge \frac{1}{|\log \en|} \sum_{i=1}^{M_n} \bar \psi_{\en}(\xi_{i,n}) =
\frac{1}{|\log \en|} \sum_{i=1}^{M_n} |\xi_{i,n}|^2 \, \bar \psi_{\en}\left(\frac{\xi_{i,n}}{|\xi_{i,n}|}\right),
\end{multline}
where the function $\bar\psi_{\en}$ is defined in \eqref{rhopsie0}.
Let $\psi$ be the function given by Corollary \ref{coro},
and let $2c:=\inf_{|\xi|=1} \psi(\xi)$.
By Proposition \ref{equicell} we deduce that for $n$ large enough
$\bar \psi_{\en}(\xi)\ge c$ for every $\xi$ with $|\xi|=1$.
By \eqref{linu} we obtain
$$
E \ge \frac{1}{|\log \en|} \sum_{i=1}^{M_n} |\xi_{i,n}|^2 \bar \psi_{\en}\left(\frac{\xi_{i,n}}{|\xi_{i,n}|}\right) \ge
\frac{c}{|\log \en|} \sum_{i=1}^{M_n} |\xi_{i,n}|^2 \ge  \frac{C}{|\log \en|} \sum_{i=1}^{M_n} |\xi_{i,n}|,
$$
where the last inequality follows from the fact that $\xi_{i,n}\in \mathbb{S}={\rm Span}_\Z S$, $S$ is a finite set,
and hence $|\xi_{i,n}|$ are bounded away from zero. We conclude that  \eqref{liminma} holds.
\par
\vskip8pt
\noindent
{\it Step $2.$ Weak compactness of the rescaled
strains.}
\vskip3pt
In view of the coercivity condition \eqref{coercon} we have
\begin{equation}\label{facile}
C |\log \en|^2 \ge C |\log \en|^2 \F_{\e} (\mu_n,\beta_n) \ge C
\int_{\Om} W(\beta_n) \,dx\ge \int_{\Om} |\bs_n|^2 \, dx.
\end{equation}
The idea of the proof is to apply the generalized Korn inequality provided by Theorem \ref{KTI}
to $\beta_n$ to control the anti-symmetric part of $\beta_n$. Note that the $\Curl$ of $\beta_n$
is clearly related to the dislocation measure $\mu_n$, whose mass is bounded by $C|\log\en|$ by Step 1.
On the other hand, it is not clear that
$|\Curl \beta_n|\le C |\log \en|$. Therefore we proceed as follows.
For every $x_{i,n}$ in the support set of $\mu_n$, set $C_{i,n}:= B_{2 \en}(x_{i,n}) \setminus B_{ \en}(x_{i,n})$
and consider the function
$ K_{i,n}: \, C_{i,n} \to \Md$ defined by
$$
K_{i,n}:=  \frac{1}{2\pi}   \xi_{i,n} \otimes J \frac{x -x_{i} }{|x-x_{i}|^2},
$$
where $J$ is the clockwise rotation of $90^{\text o}$.
It easy to show that the following estimate holds true
\begin{equation}\label{ineq-homogeneity}
\int_{C_{i,n}} |K_{i,n}|^2 \, dx  \le
 c \int_{{C_{i,n}}}|\beta_n\sym|^2 \, dx,
\end{equation}
Indeed, it is straightforward to check that
$$
\int_{C_{i,n}} |K_{i,n}|^2 \, dx = C |\xi_{i,n}|^2
$$
and  by a scaling argument, from Remark~\ref{min-prob} we get
$$
\int_{C_{i,n}} |\beta_n^{\sym}|^2 \, dx \geq c |\xi_{i,n}|^2
$$
and hence (\ref{ineq-homogeneity}).

By construction $\Curl (\beta_n-K_{i,n})=0$ in $C_{i,n}$, and so $\beta_n-K_{i,n} = \nabla v_{i,n}$ in $C_{i,n}$
for some $v_{i,n}\in H^1(C_{i,n};\R^2)$. Thus
by (\ref{ineq-homogeneity}) we have
$$
\int_{C_{i,n}} |\nabla v_{i,n}\sym|^2 \, dx \le C
\int_{{C_{i,n}}}(|\beta_n\sym|^2 + |K_{i,n}|^2 )\, dx \le C
\int_{{C_{i,n}}}|\beta_n\sym|^2 \, dx\,,
$$
and hence, applying the standard Korn's inequality to $v_{i,n}$, we deduce that
$$
\int_{C_{i,n}} |\nabla v_{i,n} - A_{i,n}|^2 \, dx \le C \int_{C_{i,n}}
|\nabla v_{i,n}\sym|^2 \, dx \le C \int_{{C_{i,n}}}|\beta_n\sym|^2 \, dx,
$$
where $A_{i,n}$ is a suitable anti-symmetric matrix. By standard extension arguments there exists
a function $u_{i,n}\in H^1 (B_{2\en}(x_{i,n});\R^2)$, such that $ \nabla u_{i,n} \equiv  \nabla v_{i,n} - A_{i,n}$ in
$C_{i,n}$, and such that
\begin{equation}\label{nuin}
\int_{ B_{2\en}(x_{i,n})} |\nabla u_{i,n}|^2 \, dx  \le \int_{{C_{i,n}}} |\nabla v_{i,n} - A_{i,n}|^2 \, dx \le C \int_{{C_{i,n}}}
|\beta_n\sym|^2 \, dx.
\end{equation}

Consider the field $\tilde \beta_n: \Om\to \Md$ defined by
$$
\tilde \beta_n(x):=
\begin{cases}
\beta_n(x) & \text{ if } x\in \Om_{\en},\\
\nabla u_{i,n}(x) + A_{i,n} & \text{ if } x\in B_{\en}(x_{i,n}).
\end{cases}
$$
In view of \eqref{nuin} it follows
$$
\int_{\Om} |\tilde \beta_n\sym|^2 \, dx = \int_{\Om} | \beta_n\sym|^2 \, dx + \sum_i
\int_{B_{\en}(x_{i,n})} |\nabla u_{i,n}\sym|^2 \, dx \le C \int_{\Om} |\beta_n\sym|^2 \, dx \le C|\log \en|^2.
$$
By construction we have
$
|\Curl \tilde \beta_n|(\Om) = |\mu_n|(\Om);
$
therefore, we can apply Theorem  \ref{KTI} to
$\tilde \beta_n$,  obtaining
\begin{equation}\label{ttt1}
\int_{\Om} |\tilde \beta_n - \tilde A_n|^2 \, dx \le
C \left(\int_{\Om} |\tilde \beta_n\sym|^2 \, dx + \big(|\mu_n|(\Om)\big)^2 \right)
\le C|\log \en|^2,
\end{equation}
where $\tilde A_n$ is the average of the anti-symmetric part of $\tilde \beta_n$.
Since the average of $\beta_n$ is a symmetric matrix, we have
\begin{equation}\label{ttt2}
\int_{\Om_{\en}(\mu_n)} |\beta_n|^2 \, dx =
\int_{\Om_{\en}(\mu_n)} |\beta_n - \tilde A_n|^2  - |\tilde A_n|^2 \, dx
\le
\int_{\Om_{\en}(\mu_n)} |\beta_n - \tilde A_n|^2 \, dx.
\end{equation}
Finally, by \eqref{ttt1} and \eqref{ttt2} we conclude
$$
\int_{\Om_{\en}(\mu_n)} |\beta_n|^2 \, dx \le
\int_{\Om_{\en}(\mu_n)} |\beta_n - \tilde A_n|^2 \, dx  \le \int_{\Om} |\tilde \beta_n - \tilde A_n|^2 \, dx \le C|\log \en|^2,
$$
which  gives the desired compactness property for $\beta_n/|\log\en|$ in $L^2(\Om;\Md)$.
\vskip8pt
\noindent

{\it Step $3.$ $\mu$ belongs to $H^{-1}(\om;\R^2)$ and $\Curl \beta=\mu$.}
\vskip3pt
Let $\f\in C^1_0(\Om)$. It is easy to construct a sequence $\{\f_n\}\subset H^1_0(\Om)$ converging  to  $\f$
uniformly and strongly in
$H^1_0(\Om)$ and satisfying the property
$$
\f_n \equiv \f(x_{i,n}) \qquad \text{ in } B_{\en}(x_{i,n})\quad \text{for every } x_{i,n} \text{ in the support set of } \mu_n.
$$
By Remark \ref{curlbeta} we have
\begin{eqnarray*}
\int_{\Om} \f \, d\mu&=& \lim_{n\to +\infty} \frac{1}{|\log \en|} \int_\Om \f_n \, d\mu_n\
=\ \lim_{n\to +\infty} \frac{1}{|\log \en|} <\Curl \beta_n, \f_n>\\
&=&\lim_{n\to +\infty} \frac{1}{|\log \en|} \int_\om \beta_n J\nabla\f_n\,dx \ =\ \int_\om \beta J\nabla\f\,dx \
=\ <\Curl \beta,\f>,
\end{eqnarray*}
from which we deduce the admissibility condition $\Curl \beta=\mu$. Moreover, since by the previous step
we have $\beta\in L^2(\Om;\Md)$, we deduce that
$\mu$ belongs to $H^{-1}(\om;\R^2)$.

\subsection{$\Gamma$-liminf inequality}
Here we prove the $\Gamma$-liminf inequality of Theorem \ref{mainthm}.
Let
$$
(\mu,\beta) \in \ms(\om;\R^2) \times L^2(\Om;\M^{2\times 2})
\text{ with } \Curl \beta = \mu,
$$
and let $(\mu_\e,\beta_\e)$ satisfy \eqref{not1} and \eqref{not2}.
In order to prove the $\Gamma$-liminf inequality
\begin{equation}\label{glii}
\liminf_{\e\to 0}\F_{\e}(\mu_\e,\beta_\e) \ge \F(\mu,\beta),
\end{equation}
it is enough to show that inequality holds  for the self and the interaction energy separately.
More precisely, we write the energy corresponding to $(\mu_\e,\beta_\e)$  in the following way
\begin{equation*}
\F_{\e}(\mu_\e,\beta_\e)=
\int_\Om \chi_{\cup B_{\rho_\e}(x_{i,\e})} W(\beta_\e) \, dx + \int_{\Om} \chi_{\Om\setminus \cup
B_{\rho_\e}(x_{i,\e})} W(\beta_\e) \, dx
\end{equation*}
and we estimate the two terms separately.
Set  $\eta_\e: =\sum_{i=1}^{M_\e}  \delta_{x_{i,\e}}$.
By Proposition~\ref{equicell} we have

\begin{equation}\label{prieq}
\frac{1}{|\log \e|}\int_\Om \chi_{\cup B_{\rho_\e}(x_{i,\e})} W(\beta_n) \, dx \ge
\int_{\Om} \bar\psi_\e \left(\frac{d\mu_\e}{d \eta_\e}\right) d\eta_\e \ge
\int_{\Om}  \psi \left(\frac{d\mu_\e}{d \eta_\e}\right) (1+ o(\e)) \, d\eta_\e \,,
\end{equation}
where $o(\e)\to 0$ as $\e\to 0$. Since ${\rm Span}_\R \mathbb
S=\R^2$, the convex $1$-homogeneous function $\f$ defined in
\eqref{defi} is finite in $\R^2$, and so continuous. Thus, in view
of Reshetnyak Theorem (see \cite[Theorem 1.2]{R}), we deduce

\begin{multline}\label{primoter}
\liminf_{\e\to 0}\frac{1}{|\log \e|}\int_{\Om}  \psi \left(\frac{d\mu_\e}{d \eta_\e}\right)  d\eta_\e \ge
\liminf_{\e\to 0}\frac{1}{|\log \e|} \int_{\Om} \f\left(\frac{d\mu_\e}{d \eta_\e}\right) \, d \eta_\e
\\
=
\liminf_{\e\to 0}\frac{1}{|\log \e|} \int_{\Om} \f\left(\frac{d\mu_\e}{d|\mu_\e|}\right) \, d|\mu_\e|
\ge
\int_{\Om} \f\left(\frac{d\mu}{d|\mu|}\right) \, d|\mu|.
\end{multline}

From \eqref{prieq} and \eqref{primoter} we have that the $\Gamma$-liminf inequality holds  for the self-energy, i.e.,
\begin{equation}\label{seee}
\liminf_{\e\to 0}\frac{1}{|\log \e|^2}\int_\Om \chi_{\cup B_{\rho_\e}(x_{i,\e})} W(\beta_\e) \, dx
\ge \int_{\Om} \f\left(\frac{d\mu}{d|\mu|}\right) \, d|\mu|.
\end{equation}

Concerning the interaction energy, by semicontinuity we immediately deduce

\begin{multline}\label{interine}
\liminf_{\e\to 0}\frac{1}{|\log \e|^2} \int_{\Om} \chi_{\Om\setminus \cup
B_{\rho_\e}(x_{i,\e})} W(\beta_\e) \, dx \\= \liminf_{\e\to 0}\int_{\Om} W\left(\chi_{\Om\setminus \cup
B_{\rho_\e}(x_{i,\e})} \frac{1}{|\log \e|} \beta_\e\sym\right) \, dx \ge \int_{\Om} W(\beta\sym) \, dx.
\end{multline}

In the last  inequality we used the fact  that the number of atoms of $\mu_\e$ is bounded by $|\log\e|$
and hence, since by assumption $|\log\e|\rho_\e^2\to 0$, the function $\chi_{\Om\setminus \cup
B_{\rho_\e}(x_{i,\e})}$ converges strongly to $1$.

Inequality \eqref{seee}, together with inequality \eqref{interine},
gives the $\Gamma$-liminf inequality.

\subsection{$\Gamma$-limsup inequality}
Here we prove the $\Gamma$-limsup inequality of Theorem \ref{mainthm}.
We begin with a lemma that will be useful in the construction of recovery sequences also for different
energy regimes under consideration.
Given $\mu_\e:= \sum_{i=1}^{M_\e} \xi_{i,\e} \delta_{x_{i,\e}} \in X_\e$ and $r_\e\to 0$, we introduce
the corresponding measures, diffused on
balls of radius $r_\e$ and on circles of radius $r_\e$, respectively, defined by

\begin{equation}\label{tmr}
\tilde\mu_\e^{r_\e}:=\sum_{i=1}^{M_\e}
\frac{\chi_{B_{r_\e}(x_{i,\e})}}{\pi r_\e^2}\xi_{i,\e}, \qquad
\hat\mu^{r_\e}_\e:=\sum_{i=1}^{M_\e} \frac{\Hh^1\res \partial
B_{r_\e}(x_{i,\e})}{2\pi r_\e}\xi_{i,\e}\,.
\end{equation}

For every $x_{\e,i}$ in the support set of $\mu_\e$ we define the functions
$\tilde K_{\e,i}^{ \xi_{\e,i}}, \, \hat K_{\e,i}^{ \xi_{\e,i}}: \, B_{r_\e}(x_{\e,i}) \to \Md$ as follows
\begin{equation}\label{defditbhb}
\tilde K_{\e,i}^{ \xi_{\e,i}}(x):= \frac{1}{2 \pi r_\e^2} \xi_{\e,i} \otimes J (x-x_{\e,i}),
\qquad
\hat K_{\e,i}^ {\xi_{\e,i}}(x):=  \frac{1}{2\pi}  \xi_{\e,i} \otimes J \frac{x -x_{\e,i} }{|x-x_{\e,i}|^2},
\end{equation}
where $J$ is the clockwise rotation of $90^{\text o}$.
Finally, we introduce the functions $\tilde K_\e^{\mu_\e}:\Om\to \R^2$, $\hat K_\e^{\mu_\e}:\Om\to \R^2$,
defined by
\begin{equation}\label{dekeb}
\tilde K_\e^{\mu_\e}:= \sum_{i=1}^{M_\e} \tilde K_{\e,i}^ {\xi_{\e,i}} \chi_{B_{r_\e}(x_{\e,i})}, \qquad
\hat K_\e^{\mu_\e}:= \sum_{i=1}^{M_\e} \hat K_{\e,i}^ {\xi_{\e,i}} \chi_{B_{r_\e}(x_{\e,i})}\,.
\end{equation}
Note that
\begin{equation}\label{rotori}
\Curl \tilde K_\e^{\mu_\e} =\tilde \mu^{r_\e}_\e - \hat \mu^{r_\e}_\e, \qquad
\Curl \hat K_\e^{\mu_\e} = - \hat \mu^{r_\e}_\e.
\end{equation}

\begin{lemma}\label{recovery}
Let $N_\e\to \infty$ be satisfying \eqref{ipn},  $\xi:= \sum_{k=1}^{M} \lambda_k \xi_k$ with
$\xi_k\in \mathbb S$, $\lambda_k\geq 0$, $ \Lambda:= \sum_k \lambda_k$,   $\mu:= \xi \, dx$ and $r_\e:=1/(2 \sqrt{\Lambda N_\e})$.
Then there exists a sequence of measures $\mu_\e = \sum_{k=1}^{M} \xi_k \mu_\e^k$ in $X_\e$, with
$\mu_\e^k$ of the type $\sum_{l=1}^{M_\e^k}  \delta_{x_{\e,l}}$  such that $B_{{r_\e}}(x) \subset\Om, \,
|x -y| \ge 2r_\e$  for every
$x, \, y$ in the support set of $\mu_\e$, and such that
\begin{equation}\label{l1}
|\mu^k_\e|/N_\e \weakst \lambda_k \, dx \text{ in } \ms(\Om),
\end{equation}
\begin{equation}\label{l2}
\frac{\mu_\e}{N_\e} \weakst \mu \text{ in } \ms(\Om;\R^2), \quad \frac{\hat \mu^{r_\e}_\e}{N_\e} \weakst \mu \text{ in } \ms(\Om;\R^2),
\quad \frac{\tilde \mu^{r_\e}_\e}{N_\e} \weakst \xi \text{ in } L^\infty(\Om;\R^2),
\end{equation}
\begin{equation}\label{l3}
\frac{\tilde \mu^{r_\e}_\e}{N_{\e}} \to \mu, \quad
\frac{\tilde \mu^{r_\e}_\e - \hat \mu^{r_\e}_\e}{N_{\e}}
\to 0 \qquad \text{ strongly in } H^{-1}(\Om; \R^2).
\end{equation}
\end{lemma}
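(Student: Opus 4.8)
The plan is to construct the points $x_{\e,l}$ by a periodic-grid placement within $\Om$, choosing a spacing tuned so that the density of points reproduces $\lambda_k\,dx$ in the limit. First I would fix a large parameter and tile $\Om$ (or rather $\Om$ minus a thin collar) by a square grid of side length $\delta_\e$ with $\delta_\e\to 0$ but $\delta_\e \gg r_\e$; the requirement $N_\e\e^s\to 0$ from \eqref{ipn} guarantees that $r_\e = 1/(2\sqrt{\Lambda N_\e}) \gg \e^s$ for every $s$, so the hard-core separation condition $|x-y|\ge 2r_\e$ and $B_{r_\e}(x)\subset\Om$ are compatible with putting on the order of $N_\e$ points into $\Om$. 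Within each grid cell $Q$ I would place $\lceil \lambda_k N_\e |Q| / |\Om| \rceil$ points carrying the vector $\xi_k$, for each $k=1,\dots,M$, spreading them on a sub-grid of the cell so that they remain at mutual distance $\ge 2r_\e$; this is feasible because the number of points per cell is of order $N_\e \delta_\e^2$ while the cell can hold of order $(\delta_\e/r_\e)^2 = \Lambda N_\e \delta_\e^2$ points, and $\Lambda\ge 1$. This defines $\mu_\e^k := \sum_l \delta_{x_{\e,l}}$ over the points of type $k$, and $\mu_\e := \sum_k \xi_k\,\mu_\e^k$.

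Next I would verify the weak-$*$ convergences. For \eqref{l1}, the measure $|\mu_\e^k|/N_\e = \mu_\e^k/N_\e$ assigns to each cell $Q$ a mass $(\lambda_k N_\e|Q|/|\Om| + O(1))/N_\e \to \lambda_k |Q|/|\Om|$; testing against a continuous function and using that the mesh $\delta_\e\to 0$, together with uniform boundedness of total mass, gives $\mu_\e^k/N_\e \weakst \lambda_k\, dx$ (here I am assuming the normalization $|\Om|$-factors are absorbed; if $\Om$ has unit measure this is literally $\lambda_k\,dx$, otherwise one rescales the counts). Then \eqref{l2} for $\mu_\e/N_\e$ follows by linearity. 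For the statements about $\hat\mu_\e^{r_\e}$ and $\tilde\mu_\e^{r_\e}$: each Dirac $\delta_{x_{\e,l}}$ is replaced by a probability measure ($\frac{1}{\pi r_\e^2}\chi_{B_{r_\e}}$ or $\frac{1}{2\pi r_\e}\Hh^1\res\partial B_{r_\e}$) supported in a ball of radius $r_\e\to 0$ around $x_{\e,l}$; since $r_\e\to 0$, for a continuous test function the difference $\int\f\,d(\delta_{x_{\e,l}} - \hat\mu\text{-blob})$ is $O(\omega_\f(r_\e))$ per point, and summing over the $O(N_\e)$ points and dividing by $N_\e$ gives convergence to the same limit, so $\hat\mu_\e^{r_\e}/N_\e \weakst \mu$. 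The $L^\infty$-weak-$*$ statement for $\tilde\mu_\e^{r_\e}/N_\e$ requires a bit more: one checks that this is a bounded sequence in $L^\infty$ (each cell contributes density of order $\lambda_k N_\e/(\pi r_\e^2 \cdot \text{\#cells})$, which stays bounded because $r_\e^2 \sim 1/(\Lambda N_\e)$) and then identifies the weak-$*$ limit as the constant $\xi = \sum_k\lambda_k\xi_k$ by testing against $L^1$ functions, reducing again to the cell-averaging computation.

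Finally, for the strong $H^{-1}$ convergences in \eqref{l3}: the key point is that $\tilde\mu_\e^{r_\e}/N_\e \weakst \xi$ in $L^\infty$ and has uniformly bounded $L^2$ norm (same density bound as above), hence up to subsequence converges weakly in $L^2$; since $L^2\hookrightarrow H^{-1}$ is compact on the bounded set $\Om$, weak $L^2$ convergence upgrades to strong $H^{-1}$ convergence, and the limit is $\xi = \mu/dx$, giving $\tilde\mu_\e^{r_\e}/N_\e\to\mu$ strongly in $H^{-1}$. For the second part, $(\tilde\mu_\e^{r_\e} - \hat\mu_\e^{r_\e})/N_\e$, I would use \eqref{rotori}: this difference equals $\Curl \tilde K_\e^{\mu_\e}$ where $\tilde K_\e^{\mu_\e}$ is supported on the union of the $r_\e$-balls and, from the explicit formula \eqref{defditbhb}, $|\tilde K_{\e,i}^{\xi_{\e,i}}(x)| \le |\xi_{\e,i}|/(2\pi r_\e)$ on $B_{r_\e}(x_{\e,i})$; therefore $\|\tilde K_\e^{\mu_\e}\|_{L^2(\Om)}^2 \le \sum_i \frac{|\xi_{\e,i}|^2}{4\pi^2 r_\e^2}\cdot \pi r_\e^2 = C\sum_i |\xi_{\e,i}|^2 \le C N_\e$, so $\|\tilde K_\e^{\mu_\e}/N_\e\|_{L^2} \le C/\sqrt{N_\e}\to 0$, and since $\Curl$ maps $L^2$ boundedly into $H^{-1}$, we get $\|(\tilde\mu_\e^{r_\e}-\hat\mu_\e^{r_\e})/N_\e\|_{H^{-1}} \le C/\sqrt{N_\e}\to 0$.

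The main obstacle I expect is the bookkeeping in the point-placement step: one must simultaneously (i) respect $B_{r_\e}(x)\subset\Om$ and the $2r_\e$-separation, (ii) get the right asymptotic count $\sim\lambda_k N_\e$ of each type with a clean error $o(N_\e)$, and (iii) make sure the rounding of the per-cell counts $\lceil\cdot\rceil$ does not accumulate — this needs $\delta_\e$ chosen so that the number of cells is $o(N_\e)$, i.e. $\delta_\e^{-2} = o(N_\e)$, which is compatible with $\delta_\e\gg r_\e \sim N_\e^{-1/2}$ only with the right rates (e.g. $\delta_\e = N_\e^{-1/4}$ works). Once the combinatorial construction is pinned down with explicit rates, every convergence reduces to the elementary cell-averaging / density-bound estimates sketched above, which are routine.
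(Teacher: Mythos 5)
Your construction and verification follow essentially the same route as the paper: points placed on a grid of spacing comparable to $2r_\e$ with local counts proportional to $\lambda_k N_\e$, the weak-$*$ limits identified by cell averaging, the strong $H^{-1}$ convergence of $\tilde\mu^{r_\e}_\e/N_\e$ obtained from the compact embedding of $L^2(\Om)$ into $H^{-1}(\Om)$, and the second convergence in \eqref{l3} obtained from \eqref{rotori} together with the estimate $\|\tilde K_\e^{\mu_\e}\|_{L^2(\Om)}^2\le C N_\e$. The only differences are organizational (the paper treats $M=1$ with one point per cube of side $2r_\e$ and handles $M>1$ by segregating the types into regions of volume fraction $\lambda_k/\Lambda$ plus a diagonal argument, rather than mixing types cell by cell), together with one small correction: the compatibility with $X_\e$, i.e. $\rho_\e \ll r_\e$, follows from $N_\e\rho_\e^2\to 0$ rather than from \eqref{ipn} alone.
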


\begin{proof}
First, we  prove  the lemma for $M=1$ and $\mu= \xi \, dx$ with $\xi\in \mathbb S$.
\par
For $\mu:= \xi\,dx$ we cover $\R^2$ with cubes  of size $2r_\e$, we
plug a mass with weight $\xi$ in the center of all of such  cubes
which are contained in $\Om$, and we set $\mu_\e$ the measure
obtained through this procedure. Let us prove \eqref{l3}, all the
other properties following easily by the definition of $\mu_\e$.
Since $\mu - \frac{\tilde \mu^{r_\e}_\e}{N_{\e}}$ converges weakly
to zero in $L^2(\Om;\R^2)$, by the compact embedding of $L^2$ in
$H^{-1}$ we have that $\mu - \frac{\tilde\mu^{r_\e}_\e}{N_{\e}} \to
0$ in $H^{-1}(\Om;\R^2)$. The fact   that $\frac{\tilde
\mu^{r_\e}_\e - \hat \mu^{r_\e}_\e}{N_{\e}} \to 0 \text{ strongly in
} H^{-1}(\Om; \R^2)$ follows directly by \eqref{rotori}, since
$\tilde K_\e^{\mu_\e}/ N_\e \to 0$ strongly in $L^2(\Om;\R^2)$ (that
can be checked by a very simple estimate).

The general case with $\xi\in\R^2$ and $M>1$ will follow by approximating $\mu=\xi dx$ with
periodic locally constant measures with weight $\xi_k$ on sets with volume fraction $\lambda_k/\Lambda$.
In each region where the approximating measure is constant we apply the construction above and then we take a diagonal sequence.
\end{proof}

We are in a position to prove the $\Gamma$-limsup inequality of Theorem \ref{mainthm}.
We will proceed in several steps.
\vskip8pt
\noindent
{\it Step $1.$ The case $\mu\equiv \xi \,dx$.}
\vskip3pt
Given $\xi\in\R^2$ and $\beta\in L^2(\Om;\Md)$ with $\Curl \beta=\xi\, dx$,  we will construct
a recovery sequence $\{\mu_\e\} \subset X_{\e}$,  $\beta_\e\in \asmn(\mu_\e)$, such that $(\mu_\e,\beta_\e)$
converges to $(\xi \, dx, \beta)$ in the sense of (\ref{not1}) and (\ref{not2}),
\begin{equation}\label{glsa}
\limsup_{\e\to 0} \frac{1}{|\log \e|^2} \int_\om W(\beta_\e) \, dx
\le \int_\om (W(\beta)  + \f(\xi))\, dx\,,
\end{equation}
and  satisfying the additional requirement that $(\beta_\e/|\log \e| - \beta) \cdot t$ tends to zero strongly
in $H^{-\frac{1}{2}}(\partial \om)$.

Let $\f$ be the self-energy density defined in \eqref{defi}, and let $\lambda_k\geq 0$, $\xi_k\in \mathbb S$
be such that $\xi=\sum_k\lambda_k\xi_k$ and
\begin{equation}\label{recf}
\f(\xi)= \sum_{k=1}^M \lambda_k \psi(\xi_k).
\end{equation}
Consider the sequence $\mu_\e:= \sum_{i=1}^{ M_\e} \xi_{\e,i} \delta_{x_{\e,i}}$ given by Lemma \ref{recovery}
with $N_\e=|\log \e|$. Note that, since $N_\e\rho_\e^2\to 0$, we have that $r_\e>\!\!>\rho_\e$, and so $\mu_\e\in X_\e$.

Since the function $\hat K^{\xi_{\e,i}}_{\e,i}$ defined in (\ref{defditbhb}) belongs to $ \asm_{\e,\rho_\e}(\xi_{\e,i})$
and  satisfies condition (\ref{pallino}), by Proposition \ref{equicell} for every $x_{\e,i}$ in the support set of $\mu_\e$
we can find a strain
$\hat \beta_{\e,i}:\Omega \to \Md$ such that
\begin{itemize}
\item[1)] $\hat \beta_{\e,i}\in \asm_{\e,\rho_\e}(\xi_{\e,i})$,
\item[2)] $\hat\beta_{\e,i}\cdot t = \hat K^{\xi_{\e,i}}_{\e,i}\cdot t$ on $\partial B_\e(x_{\e,i})
\cup\partial B_{\rho_\e}(x_{\e,i})$,
\item[3)] $\frac{1}{|\log\e|}\displaystyle{\int_{B_{\rho_\e}(x_{\e,i})\setminus B_\e(x_{\e,i})} W(\hat \beta_{\e,i})\, dx
= \psi(\xi_{\e,i}) (1 + o(\e))}$ where
$o(\e) \to 0 $ as $\e\to 0$.
\end{itemize}
Extend  $\hat \beta_{\e,i}$ to be $\hat K^{\xi_{\e,i}}_{\e,i}$ in $B_{r_\e}(x_{\e,i})\setminus B_{\rho_\e}(x_{\e,i})$
and  zero in $\Omega\setminus(B_{r_\e}(x_{\e,i})\setminus B_\e(x_{\e,i}))$,
and set
\begin{equation}\label{dekeb2}
\hat \beta_\e:= \sum_{i=1}^{M_\e} \hat \beta_{\e,i} . \end{equation}
Then
\begin{itemize}
\item[4)] $\Curl \hat \beta_{\e} =- \hat \mu^{r_\e}_\e+\hat\mu^\e_\e$, \end{itemize}
where $\hat\mu^{r_\e}_\e$ and $\hat\mu^\e_\e$ are defined  in (\ref{tmr}). Finally, set
\begin{equation}\label{ultima}
\bar \beta_\e:= |\log\e| \beta - \tilde K_\e^{\mu_\e} + \hat \beta_\e,
\end{equation}
where $\tilde K_\e^{\mu_\e}$ is defined according to \eqref{dekeb}.
By Lemma \ref{recovery} and  \eqref{rotori}
$$
\Curl \frac{\bar \beta_\e}{|\log \e|}\LL \Omega_\e(\mu_\e)= (\mu -\frac{ \tilde \mu^{r_\e}_\e}{|\log\e|}
+\frac{ \hat \mu^{r_\e}_\e}{|\log\e|}-\frac{ \hat \mu^{r_\e}_\e}{|\log\e|})
=(\mu -\frac{ \tilde \mu^{r_\e}_\e}{|\log\e|}) \to 0 \quad \text{ in }
H^{-1}(\Om; \R^2).
$$
Therefore, we can add to $\bar\beta_\e$ a vanishing sequence $\frac{R_\e}{|\log\e|} \to 0$ in
$L^{2}(\Om;\Md)$, obtaining the admissible strain
\begin{equation}\label{Reps}
\beta_\e:= (\bar \beta_\e + R_\e) \chi_{\Om_\e(\mu_\e)}.
\end{equation}
In order to prove that the pair $(\mu_\e,\beta_\e)$ is the desired recovery sequence
we have to check  the following properties
\begin{itemize}
\item[i)]  $\beta_\e$ converge to $\beta$ in the sense of definition \eqref{not2};
\item[ii)] The pair $(\beta_\e,\mu_\e)$ is a recovery sequence, i.e.,
$$
\lim_{\e\to 0} \int_\om W(\beta_\e) \, dx= \int_\om (W(\beta)  +
\f(\xi)) \,dx .
$$
\end{itemize}
To prove i),  we first note that since $M_\e\sim |\log\e|$, $r_\e\sim 1/\sqrt\e$, we have that
$$
\int_{\om_{\rho_\e}(\mu_\e)}\frac{|\hat \beta_\e|^2}{|\log\e|}dx\,
= \int_{\om_{r_\e}(\mu_\e)\setminus \om_{\rho_\e}(\mu_\e)}\frac{|\hat K^{\mu_\e}_\e|^2}{|\log\e|}dx\,\to 0\,,
$$
which implies that $\hat\beta_\e/|\log\e|$ is concentrated on the hard core region. Then
by Lemma \ref{recovery}, $|\mu^k_\e|/N_\e \weakst \lambda_k \, dx$ for every $k$, and by property 3) we have
\begin{multline}\label{segls}
\lim_{\e\to 0} \frac{1}{|\log \e|^2}\int_{\om} W(\hat \beta_\e) \, dx
= \lim_{\e\to 0} \frac{1}{|\log \e|^2}\int_{\om\setminus\om_{\rho_\e}(\mu_\e)} W(\hat \beta_\e) \, dx\\
=\lim_{\e\to 0} \frac{1}{|\log \e|} \sum_{k=1}^M |\mu_{\e}^k|(\Om) \psi(\xi_k) = \sum_{k=1}^M \lambda_k \psi(\xi_k) = \f(\xi).
\end{multline}
In particular, we deduce that $ \hat \beta_\e/|\log\e|$ is bounded in $L^2(\om;\Md)$.
Since the $L^2$ norm of  $\hat \beta_\e/|\log\e|$ is concentrating on the hard core region,
we conclude that $ \hat \beta_\e/ {|\log\e|}$
converges  weakly  to zero in $L^2(\om;\Md)$. On the other hand, one
can  check directly that $\tilde K_\e^{\mu_\e} /|\log\e|$ converges
strongly to zero in $L^2(\Omega)$. Recalling that also
$R_\e/|\log\e|\to 0$, by \eqref{ultima} and \eqref{Reps} we conclude
that $i)$ holds.
\par
Next we prove ii), namely,  that   the pair $(\beta_\e,\mu_\e)$ is optimal in energy. We have
$$
\lim_{\e\to 0} \frac{1}{|\log \e|^2}\int_\om W(\beta_\e) \, dx= \lim_{\e\to 0} \frac{1}{|\log \e|^2}
\int_\om W(|\log \e| \beta + \hat \beta_\e)\, dx.
$$
Since $\hat \beta_\e/|\log \e| \weak 0$ in $L^2(\Om;\Md)$, taking into account also \eqref{segls}, we conclude
$$
\lim_{\e\to 0} \frac{1}{|\log \e|^2}\int_\om W(\beta_\e)\,dx= \lim_{\e\to 0} \frac{1}{|\log \e|^2}
\left( \int_{\om} W(|\log \e| \beta)\,dx + \int_{\om} W(\hat\beta_\e)\,dx \right) =
\int_\om (W(\beta) + \f(\xi))\,dx.
$$
Finally, by the Lipschitz continuity of $\partial \om$, from
\eqref{ultima} and (\ref{Reps}) we also deduce that $(\beta_\e/|\log
\e| - \beta) \cdot t$ tends to zero strongly in
$H^{-\frac{1}{2}}(\partial \om)$.

\vskip8pt
\noindent
{\it Step $2.$ The case $\mu := \sum_{l=1}^L \chi_{A_l} \xi_l \,dx$.}
\vskip3pt

In this step we  proof  the $\Gamma$-limsup inequality
in the case of $\mu$ locally constant, i.e., of the type
\begin{equation}\label{loco}
\mu:= \sum_{l=1}^L \chi_{A_l} \xi_l\,dx,
\end{equation}
where $A_l$ are open subsets of $\Om$ with Lipschitz continuous boundary and $\xi_i\in \M^{2\times 2}$.
The construction of the recovery sequence is based on classical localization arguments in $\Gamma$-convergence and
takes advantage of the previous step.
\par
Let us set $\beta_l:= \beta \res A_l$, and let $\mu_{l,\e}$, $\beta_{l,\e}$ be the recovery sequence
given by Step $1$ applied with $\Om$ replaced by $A_l$, with $\beta=\beta_l$, $\mu=\xi_l\,dx$. Finally
let us set $\bar \beta_\e:\Om\to \Md$ and $\mu_\e\in \ms(\om;\R^2)$ as follows
$$
\bar \beta_\e(x):= \beta_{l,\e} \qquad \text{ if } x\in A_l, \qquad \mu_\e:= \sum_l
\mu_{l,\e}.
$$
By construction we have $\mu_\e\in X_\e$. Moreover  since
$$
\frac{1}{|\log \e|}\|\Curl \bar\beta_\e\res \om_\e(\mu_\e) \|_{H^{-1}(\Om; \R^2)}
\leq \sum_l \left\|\frac{\beta_{\e,l}}{|\log\e|}-\beta\right\|_{H^{-\frac{1}{2}}(\partial A_l)}\,,
$$
from the previous step it easily follows that
$$
\frac{1}{|\log \e|}\Curl \bar\beta_\e\res\om_\e(\mu_\e) \to 0
$$
strongly in $H^{-1}(\Om;\R^2)$.
Therefore we can easily modify this sequence $\bar \beta_\e$ by adding a vanishing perturbation
in order to obtain the desired recovery sequence  $\beta_\e$.

\vskip8pt
\noindent
{\it Step $3.$ The general case.}
\vskip3pt

In this step we show how to construct a recovery sequence in the general case
(namely for a general dislocation measure $\mu \in \ms(\Om;\R^2)$).
\par
Let $(\mu,\beta)$ be given  in the domain of the $\Gamma$-limit $\F$. In view of the previous step
and by standard density arguments in $\Gamma$-convergence, it is enough to  construct  sequences
$(\mu_n,\beta_n)$
with $\Curl \beta_n=\mu_n$ and  with $\mu_n$ locally constant as in \eqref{loco}  such that
\begin{equation}\label{aim}
\beta_n \to \beta \text{ in } L^2(\Om;\M^{2\times 2}), \qquad
\mu_n \weakstar \mu \text{ in }\ms(\Om;\M^{2\times2})\quad \text{ and } \quad  |\mu_n|(\om)\to |\mu|(\om)  .
\end{equation}
Indeed,  under these convergence assumptions we get the convergence
of the corresponding energies, i.e.,
\begin{equation}\label{mainaim}
\lim_{n\to +\infty} \F(\beta_n,\mu_n)=\F(\beta,\mu)\,.
\end{equation}
By standard reflection arguments we can extend the strain $\beta$ to a function $\beta_A$
defined in a neighborhood $A$ of $\Om$, such that $\Curl \beta_A = \mu_A$ is a measure on
$A$ and moreover $| \mu_A|(\partial \Om)=0$.
\par
Let $\rho_h$ be a sequence of mollifiers, and define
$$
f_h:= \beta_A * \rho_h \res \Om, \qquad
g_h:= \mu_A * \rho_h \res \Om.
$$
For $h$ large enough these objects are well defined in $\Om$ and $\Curl f_h= g_h$. Clearly
\begin{equation}\label{ar-1}
f_h \to \beta \text{ in } L^2(\Om;\M^{2\times 2}),
\qquad
g_h \, dx \weakstar \mu \text{ in }\ms(\om;\R^2).
\end{equation}
Moreover,
since $|\mu_A|(\partial \Om)=0$, we have
\begin{equation}\label{ar-11}
|g_h \, dx|(\om)\to |\mu|(\om)  .
\end{equation}
Next, we approximate every $g_h$ by locally constant functions, precisely,  we consider
a locally constant function $g_{h,k}$ such that
\begin{equation}\label{ar0}
|g_{h,k} -g_h|_{L^\infty(\Om;\R^2)} \to 0  \text{ as } k\to \infty,\qquad\text{ and } \qquad \int_\om g_{h,k} - g_h\, dx =0\,.
\end{equation}
Let $r_{h,k}$ be the solution of the following problem
\begin{equation}\label{ar1}
\left\{
\begin{array}{ll}
\Curl r_{h,k} = g_{h,k} - g_h & \text{ in } \Om, \\
\text{Div } r_{h,k} = 0 & \text{ in } \Om , \\
r_{h,k}\cdot t = 0 & \text{ in }\partial \Om.
\end{array}
\right.
\end{equation}
By standard elliptic estimates we have
\begin{equation}\label{ar2}
|r_{h,k}|_{L^2(\Om;\M^{2\times 2})} \le C |g_{h,k} - g_h|_{L^2(\Om;\R^2)}.
\end{equation}
Finally, we set $f_{h,k}:= f_h + r_{h,k}$. By \eqref{ar1} we have
$\Curl f_{h,k}= g_{h,k}$. Moreover, by \eqref{ar0}, \eqref{ar2} we have
\begin{equation}\label{ar3}
f_{h,k} \to f_h \text{ in } L^2(\Om;\M^{2\times 2}) \quad \text{ as } k\to\infty.
\end{equation}
By \eqref{ar-1}, \eqref{ar0}, \eqref{ar3}, using a diagonal argument we can find
a sequence $(\mu_n,\beta_n)$ satisfying \eqref{aim}, and therefore \eqref{mainaim}.

\section{The sub-critical case ($N_\e<\!\!< |\log\e|$)}\label{sub}
In this section we study the asymptotic behavior of the energy
functionals $E_\e$ defined in \eqref{elaene}
in the case of dilute dislocations, i.e., for $N_\e <\!\!< |\log\e|$. In terms of $\Gamma$-convergence, it means that we rescale
$E_\e$ with a prefactor $N_\e |\log \e|$, with $N_\e <\!\!< |\log \e|$.
As we discussed in Section \ref{Scalings}, in this case  the  self-energy for minimizing
sequences is predominant with respect to the interaction energy (see Remark \ref{selfpre}).
\par
In contrast  to the critical case, we have that the prefactors of strains and dislocation
measures in the sub-critical case are different. Indeed, the natural rescaling for the
dislocation measures is given by $N_\e$. On the other hand, in order to catch the effect
of the diffuse  energy associated with a sequence $(\mu_\e,\beta_\e)$ with bounded energy
we have to rescale the strains  by $(N_\e |\log\e|)^{1/2}$. These two quantities clearly
coincide only in the critical case $N_\e \equiv |\log\e|$. The effect of a different
rescaling for strains and dislocation measures is that in the limit configuration
$\mu$ and $\beta$ are independent variables, i.e.,  the compatibility condition
$\Curl \beta=\mu$ disappears in this limit. Actually the admissible strains
in the limit are always gradients, i.e., $\Curl \beta=0$. Heuristically this is a consequence of the fact that
the total variation of $\Curl \beta_\e$ is of order $N_\e$, so that
$\Curl (\beta_\e/(N_\e|\log\e|)^{1/2})$ vanishes.
\par
The candidate $\Gamma$-limit of the functionals $\Fdi_\e:
\ms(\om;\R^2) \times L^2(\Om;\Md)$ defined in \eqref{renefa1}  is
the functional  $\F^{\text{dilute}}$ defined by
\begin{equation}\label{glsc}
\Fdi(\mu,\beta):=
\begin{cases}
\displaystyle{\int_{\Om} W(\beta) \, dx + \int_{\Om} \varphi\left(\frac{d\mu}{d|\mu|}\right) \, d|\mu|}
& \text{ if } \Curl\beta = 0;\\
+\infty & \text{ otherwise in } L^2(\Om;\Md).
\end{cases}
\end{equation}
 The precise  $\Gamma$-convergence result is the following.

\begin{theorem}\label{mainthmsc2}
Let $N_\e\to \infty$ be  such that $N_\e/|\log\e|\to 0$. Then the following  $\Gamma$-convergence result holds.
\begin{itemize}
\item[i)]{\bf Compactness.}
Let $\en\to 0$ and let $\{(\mu_n,\beta_n)\}$ be a sequence in
$\ms(\om;\R^2)\times L^2(\Om;\M^{2\times 2})$ such that
$\Fdi_{\en}(\mu_n,\beta_n)\le  E$ for some positive constant $ E$
independent of $n$. Then there exist $\mu\in \ms(\om;\R^2)$ and
$\beta\in L^2(\Om;\M^{2\times 2})$, with $\Curl \beta =0$, such that
(up to a subsequence)
\begin{equation}\label{not1sc2}
\frac{1}{N_{\en}}\mu_n \weakst \mu \qquad {\rm in}\quad \ms(\om;\R^2)\,,
\end{equation}
\begin{equation}\label{not2sc2}
\frac{1}{(N_{\en} |\log\en|)^{1/2}} \beta_n \weak \beta \qquad {\rm in}\quad L^2(\Om;\M^{2\times 2})\,.
\end{equation}
\item[ii)]{\bf $\Gamma$-convergence.}
The functionals  $\F_{\e}$ $\Gamma$-converge as ${\e} \to 0$, with
respect to the convergence in \eqref{not1sc2} and \eqref{not2sc2},
to the functional $\Fdi$ defined  in \eqref{glsc}. More precisely,
the following inequalities hold.
\begin{itemize}
\item[]{$\Gamma$-liminf inequality:}
for every  $(\mu,\beta) \in \ms(\om;\R^2) \times L^2(\Om;\Md)$, with $\Curl \beta=0$,
and for every sequence
$(\mu_\e,\beta_\e) \in X_{\e}\times L^2(\Om;\M^{2\times 2})$
satisfying \eqref{not1sc2} and \eqref{not2sc2}, we have
$$
\liminf_{\e\to 0}\Fdi_{\e}(\mu_\e,\beta_\e) \ge \Fdi(\mu,\beta);
$$
\item[]{$\Gamma$-limsup inequality:}
given $(\mu,\beta)  \in \ms(\om;\R^2) \times L^2(\Om;\Md) $, with $\Curl \beta = 0$, there exists
$(\mu_\e,\beta_\e)\in X_{\e}\times L^2(\Om;\M^{2\times 2})$ satisfying \eqref{not1sc2} and \eqref{not2sc2}  such that
$$
\limsup_{\e\to 0}\Fdi_{\e}(\mu_\e,\beta_\e) \le \Fdi(\mu,\beta).
$$
\end{itemize}
\end{itemize}
\end{theorem}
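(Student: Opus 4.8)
The plan is to reproduce the scheme of Theorem~\ref{mainthm}, pointing out the two places where the sub-critical scaling $N_\e\ll|\log\e|$ matters: it makes the mass of the rescaled dislocation measures negligible compared with the $L^2$-scale $(N_\e|\log\e|)^{1/2}$ of the strains, which is simultaneously what keeps the rescaled strains bounded in $L^2$ and what kills the constraint $\Curl\beta=\mu$ in the limit.

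\emph{Compactness.} I would first bound the masses exactly as in Step~1 of the critical case: using $\beta_n\equiv0$ on $\Om\setminus\Om_{\en}(\mu_n)$, a change of variables on each $B_{\rho_{\en}}(x_{i,n})$, Proposition~\ref{equicell}, the bound $\bar\psi_{\en}(\xi)\ge c$ for $|\xi|=1$, and $\xi_{i,n}\in\mathbb S$, one gets
$$
E\ \ge\ \Fdi_{\en}(\mu_n,\beta_n)\ \ge\ \frac1{N_{\en}}\sum_i\bar\psi_{\en}(\xi_{i,n})\ \ge\ \frac{C}{N_{\en}}\,|\mu_n|(\Om),
$$
so $\mu_n/N_{\en}$ is bounded in mass and weakly-$*$ precompact. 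For the strains, coercivity~\eqref{coercon2} gives $\int_\Om|\bs_n|^2\,dx\le CN_{\en}|\log\en|$; then I would run the surgery near the cores of Step~2 of the critical case, producing $\tilde\beta_n$ with $\int_\Om|\tilde\beta_n\sym|^2\,dx\le CN_{\en}|\log\en|$ and $|\Curl\tilde\beta_n|(\Om)=|\mu_n|(\Om)$, and apply the generalized Korn inequality (Theorem~\ref{KTI}):
$$
\int_{\Om_{\en}(\mu_n)}|\beta_n|^2\,dx\ \le\ C\Bigl(\int_\Om|\tilde\beta_n\sym|^2\,dx+\bigl(|\mu_n|(\Om)\bigr)^2\Bigr)\ \le\ C\bigl(N_{\en}|\log\en|+N_{\en}^2\bigr)\ \le\ C\,N_{\en}|\log\en|,
$$
the last step using precisely $N_{\en}\ll|\log\en|$. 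Hence $\beta_n/(N_{\en}|\log\en|)^{1/2}$ is $L^2$-bounded, converging weakly to some $\beta$. Finally, testing $\Curl\beta_n$ against admissible $\f_n\in H^1_0(\Om)$ as in Step~3 of the critical case gives $|\langle\Curl\beta,\f\rangle|=\lim_n(N_{\en}|\log\en|)^{-1/2}\bigl|\int_\Om\f_n\,d\mu_n\bigr|\le\lim_n C\|\f\|_\infty(N_{\en}/|\log\en|)^{1/2}=0$, so $\Curl\beta=0$.

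\emph{$\Gamma$-liminf.} I would split $\Fdi_\e(\mu_\e,\beta_\e)$ into the part on $\bigcup_iB_{\rho_\e}(x_{i,\e})$ and the part on its complement. For the hard-core part the argument of the critical case (Proposition~\ref{equicell}; $\psi\ge\f$ on $\mathbb S$; $1$-homogeneity and convexity of $\f$; Reshetnyak's theorem; $\mu_\e/N_\e\weakst\mu$) yields $\liminf_\e(N_\e|\log\e|)^{-1}\int_{\bigcup_iB_{\rho_\e}(x_{i,\e})}W(\beta_\e)\,dx\ge\int_\Om\f(d\mu/d|\mu|)\,d|\mu|$. For the complement, since the number of dislocations is $O(N_\e)$ and $N_\e\rho_\e^2\to0$, $\chi_{\Om\setminus\bigcup_iB_{\rho_\e}(x_{i,\e})}\to1$ in $L^2$, hence $\chi_{\Om\setminus\bigcup_iB_{\rho_\e}(x_{i,\e})}\,\bs_\e/(N_\e|\log\e|)^{1/2}\weak\bs$ in $L^2$, and weak lower semicontinuity of $\beta\mapsto\int_\Om W(\beta)\,dx$ gives the bound $\int_\Om W(\beta)\,dx$. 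Adding the two bounds proves the liminf inequality.

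\emph{$\Gamma$-limsup.} By the localization and density arguments of Steps~2--3 of the critical $\Gamma$-limsup it suffices to treat $\mu=\xi\,dx$ on a bounded Lipschitz domain; here the density step is in fact easier, since $\mu$ and $\beta$ are now independent and $\beta$ is only constrained by $\Curl\beta=0$. Writing $\xi=\sum_k\lambda_k\xi_k$ with $\xi_k\in\mathbb S$, $\lambda_k\ge0$ and $\f(\xi)=\sum_k\lambda_k\psi(\xi_k)$, I would take the measures $\mu_\e\in X_\e$ of Lemma~\ref{recovery} with the present $N_\e$ (it satisfies~\eqref{ipn}, and with $N_\e\rho_\e^2\to0$ this gives $r_\e=1/(2\sqrt{\Lambda N_\e})\gg\rho_\e$, so $\mu_\e\in X_\e$), and build, mimicking Step~1 of the critical $\Gamma$-limsup with $(N_\e|\log\e|)^{1/2}$ replacing $|\log\e|$,
$$
\beta_\e:=\bigl((N_\e|\log\e|)^{1/2}\,\beta-\tilde K_\e^{\mu_\e}+\hat\beta_\e+R_\e\bigr)\chi_{\Om_\e(\mu_\e)},
$$
where $\hat\beta_\e=\sum_i\hat\beta_{\e,i}$ carries the optimal self-energy on the hard cores (Proposition~\ref{equicell}, with boundary datum $\hat K_{\e,i}^{\xi_{\e,i}}$, extended by $\hat K_{\e,i}^{\xi_{\e,i}}$ on $B_{r_\e}\setminus B_{\rho_\e}$ and by $0$ elsewhere) and $R_\e$ is a vanishing div--curl correction rendering $\beta_\e\in\asmn(\mu_\e)$. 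Since $\Curl\beta=0$, one has $\Curl\bar\beta_\e\res\Om_\e(\mu_\e)=-\tilde\mu_\e^{r_\e}\res\Om_\e(\mu_\e)$ up to negligible mass, and $\tilde\mu_\e^{r_\e}/N_\e\to\mu$ in $H^{-1}$ gives $\Curl\bigl(\beta_\e/(N_\e|\log\e|)^{1/2}\bigr)\res\Om_\e(\mu_\e)=O\bigl((N_\e/|\log\e|)^{1/2}\bigr)\to0$ in $H^{-1}$, so such an $R_\e$ with $R_\e/(N_\e|\log\e|)^{1/2}\to0$ in $L^2$ exists. Moreover $\tilde K_\e^{\mu_\e}/(N_\e|\log\e|)^{1/2}\to0$ strongly and $\hat\beta_\e/(N_\e|\log\e|)^{1/2}\weak0$ in $L^2$ (bounded and concentrated on the vanishing-measure set $\bigcup_iB_{r_\e}(x_{i,\e})$), whence $\beta_\e/(N_\e|\log\e|)^{1/2}\weak\beta$ and $\mu_\e/N_\e\weakst\mu$. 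As the cross term then vanishes,
$$
\lim_{\e\to0}\frac1{N_\e|\log\e|}\int_\Om W(\beta_\e)\,dx=\int_\Om W(\beta)\,dx+\lim_{\e\to0}\frac1{N_\e|\log\e|}\int_\Om W(\hat\beta_\e)\,dx=\int_\Om\bigl(W(\beta)+\f(\xi)\bigr)\,dx,
$$
the last identity using property~3) of $\hat\beta_{\e,i}$, the bound $\sum_i|\xi_{i,\e}|^2(\log r_\e-\log\rho_\e)/(N_\e|\log\e|)\to0$ for the tails, and $|\mu_\e^k|(\Om)/N_\e\to\lambda_k|\Om|$. This gives $\limsup_{\e\to0}\Fdi_\e(\mu_\e,\beta_\e)\le\Fdi(\mu,\beta)$.

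\emph{Main obstacle.} The delicate point is the $L^2$-compactness of the rescaled strains: one does not control $|\Curl\beta_n|$ by $CN_{\en}$, so Theorem~\ref{KTI} can be invoked only after surgery near the cores, and the resulting bound $N_{\en}|\log\en|+N_{\en}^2$ is consistent with the rescaling $(N_{\en}|\log\en|)^{1/2}$ exactly because $N_{\en}\ll|\log\en|$; the very same inequality forces $\Curl\beta=0$ in the limit, producing the decoupled form of $\Fdi$.
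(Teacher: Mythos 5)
Your proposal is correct and follows essentially the same route as the paper: the paper proves compactness "with minor changes as in the critical case" (your spelled-out version, including the surgery near the cores, the generalized Korn inequality, and the decisive bound $N_{\en}|\log\en|+N_{\en}^2\le CN_{\en}|\log\en|$, is exactly what is meant), kills $\Curl\beta$ by the same test-function computation, takes the liminf verbatim from the critical case, and builds the recovery sequence with the same ansatz $(N_\e|\log\e|)^{1/2}\beta-\tilde K_\e^{\mu_\e}+\hat\beta_\e+R_\e$ based on Lemma~\ref{recovery} and Proposition~\ref{equicell}. The only cosmetic difference is that the paper justifies the smallness of the curl discrepancy $\tilde\mu_\e^{r_\e}/(N_\e|\log\e|)^{1/2}$ by the uniform bound on its density rather than by your equivalent $H^{-1}$ scaling argument.
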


\begin{remark}\label{selfpre}
{\rm The independence of strains and dislocation measures in the
$\Gamma$-limit is  a consequence of the fact that, in the dilute
regime, the interaction energy is a lower order term  with respect
to the self-energy. Indeed, by the $\Gamma$-convergence result
stated in Theorem \ref{mainthmsc2} we immediately deduce that the
functionals $\cE^{\text{dilute}}_\e$, defined by
\begin{equation*}
\cE^{\text{dilute}}_\e(\mu):= \min_{\beta\in \asmn(\Om)}
\Fdi_\e(\mu,\beta),
\end{equation*}
$\Gamma$-converge (as $\e\to 0$) to the functional $\cE^{\text{dilute}}: \ms(\om;\R^2)\to \R$ defined by
\begin{equation*}
\cE^{\text{dilute}}(\mu):= \int_{\Om}
\varphi\left(\frac{d\mu}{d|\mu|}\right) \, d|\mu|.
\end{equation*}
The energy $\cE^{\text{dilute}}(\mu)$ represents the energy stored
in the crystal induced by the distribution of dislocations $\mu$ in
the dilute regime, and it is given only by the self-energy. }
\end{remark}

The proof of Theorem \ref{mainthmsc2} follows the lines of the proof of Theorem \ref{mainthm}.
For the reader convenience we sketch its main steps.

\begin{proof}[Proof of Theorem \ref{mainthmsc2}]
The compactness property of rescaled strains and dislocation measures stated in
\eqref{not1sc2} and \eqref{not2sc2} can be proved with minor changes as in the
critical case. Let us prove that in this case $\Curl \beta=0$.  Let $\f\in C^1_0(\Om)$ and
let $\{\f_n\}\subset H^1_0(\Om)$ be a sequence converging  to  $\f$ uniformly and strongly in
$H^1_0(\Om)$ and satisfying the property
$$
\f_n \equiv \f(x_{i,n}) \qquad \text{ in } B_{\en}(x_{i,n}) \quad
\text{ for every } x_{i,n} \text{ in the support set of } \mu_n.
$$
By Remark \ref{curlbeta} we have
\begin{multline*}
<\Curl \beta,\f>=\lim_{\en\to 0} \frac{1}{N^{1/2}_{\en} |\log \en|^{1/2}} <\Curl \beta_n, \f_n>  \\ =
\lim_{\en\to 0} \frac{N^{1/2}_{\en}}{|\log \en|^{1/2}} \frac{1}{N_{\en}} \int_\Om \f_n \, d\mu_n = \lim_{\en\to 0} \frac{N^{1/2}_{\en}}{|\log \en|^{1/2}} \int_{\Om} \f \, d\mu=0
\end{multline*}
from which we deduce  $\Curl \beta=0$.

\par
Concerning the $\Gamma$-convergence result, the proof of $\Gamma$-liminf inequality is
identical to that of the critical case, so that we pass directly to the proof of the $\Gamma$-limsup inequality.
\par

As in the critical case, classical localization arguments reduce the problem to the case of  $\mu$  constant.
(Note that the density argument used in the critical case is even easier in the sub-critical case,
since no admissibility condition $\Curl \beta= \mu$ is required.)
\par
The proof of the $\Gamma$-limsup inequality reduces to find  a
sequence $\{\mu_\e\} \subset X_{\e}$, with $1/N_{\e}\mu_\e \weakstar
\xi\,dx$ in $\mathcal{M}(\Om;\R^2)$, and  a sequence $\beta_\e\in
\asmn(\mu_\e)$, with $1/(N_{\en}|\log \en|)^{1/2} \beta_\e  \weak
\beta$ in $L^2(\Om;\Md)$, such that
\begin{equation}\label{glsasc}
\limsup_{\e\to 0}\frac{1}{N_{\en} |\log \en|} \int_\Om
W(\beta_\e)\,dx \le \int_\Om (W(\beta) + \f(\xi)) \,dx,
\end{equation}
and  satisfying the additional requirement that $(\beta_\e/|\log \e| - \beta) \cdot t$ tends to zero strongly in $H^{-\frac{1}{2}}(\partial \om)$.

Consider the sequence $\mu_\e:= \sum_{i=1}^{M_\e} \xi_{\e,i} \delta_{x_{\e,i}}$ given by Lemma \ref{recovery}.
Construct  the functions $\hat \beta_\e:\Om\to \R^2$
as in \eqref{dekeb2}. Then set
\begin{equation}\label{ultima2}
\bar \beta_\e:= (N_\e |\log\e|)^{\frac{1}{2}} \beta - \tilde K_\e^{\mu_\e} + \hat \beta_\e,
\end{equation}
where $\tilde K_\e^{\mu_\e}$ is defined according to \eqref{dekeb}.
By \eqref{rotori}
$$
\Curl \bar \beta_\e \LL \Omega_\e(\mu_\e)= -  \tilde \mu^{r_\e}_\e
$$
By its  definition the density of the measure $\tilde \mu^{r_\e}_\e/{(N_\e|\log \e|)^{\frac{1}{2}}}$
tends to zero uniformly as $\e\to 0$.
Therefore, we can add to $\bar\beta_\e$ a  sequence $R_\e$ in $L^{2}(\Om;\Md)$, with
${R_\e}/{(N_\e|\log \e|)^{\frac{1}{2}}} \to 0$,
obtaining the admissible strain
\begin{equation}\label{Reps2}
\beta_\e:= (\bar \beta_\e + R_\e) \chi_{\Om_\e(\mu_\e)}.
\end{equation}
In order to prove that the pair $(\mu_\e,\beta_\e)$ is the desired recovery sequence,
we have to check  the following properties
\begin{itemize}
\item[i)]  $\beta_\e$ converge to $\beta$ in the sense of definition \eqref{not2sc2};
\item[ii)] The pair $(\beta_\e,\mu_\e)$ is a recovery sequence, i.e.,
$$
\lim_{\e\to 0} \int_\om W(\beta_\e) \, dx= \int_\om (W(\beta)  +
\f(\xi)) \,dx .
$$
\end{itemize}
To prove i),
recalling that by Lemma \ref{recovery}, $|\mu^k_\e|/N_\e \weakst \lambda_k \, dx$ for every $k$, we have
\begin{equation}\label{seglssc}
\lim_{\e\to 0} \frac{1}{N_\e |\log\e|}\int_{\om} W(\hat \beta_\e) \, dx
= \lim_{\e\to 0} \frac{1}{N_\e} \sum_{k=1}^M |\mu_{\e}^k|(\Om) \psi(\xi_k)
= \sum_{k=1}^M \lambda_k \psi(\xi_k) = \f(\xi).
\end{equation}
We deduce that $ \hat \beta_\e/(N_\e |\log\e|)^{1/2}$ is bounded in $L^2(\om;\Md)$. As in the critical case,
the $L^2$ norm of  $\hat \beta_\e/|\log\e|$ is concentrating on the hard core region, so that   $ \hat \beta_\e/ {|\log\e|}$
converges  weakly  to zero in $L^2(\om;\Md)$. On the other hand one can  check directly
that $\tilde K_\e^{\mu_\e} / (N_\e |\log\e|)^{1/2}$ converges strongly to zero in $L^2(\Omega)$, from which  property i) follows.
\par
Concerning ii),  we prove that   the pair $(\beta_\e,\mu_\e)$ is optimal in energy. We have
$$
\lim_{\e\to 0} \frac{1}{(N_\e |\log \e|)}\int_\om W(\beta_\e) \, dx= \lim_{\e\to 0} \frac{1}{(N_\e |\log \e|)}
\int_\om W(|\log \e| \beta + \hat \beta_\e)\, dx.
$$
Since $ \hat \beta_\e/ {|\log\e|} \weak 0$ in $L^2(\Om;\Md)$, taking into account also \eqref{seglssc}, we conclude
\begin{multline*}
\lim_{\e\to 0} \frac{1}{N_\e |\log \e|}\int_\om W(\beta_\e)\,dx
= \lim_{\e\to 0} \frac{1}{N_\e |\log \e|}
\left( \int_{\om} W((N_\e |\log \e|)^{\frac{1}{2}} \beta)\,dx + \int_{\om} W(\hat\beta_\e)\,dx \right)
\\ =
\int_\om (W(\beta) + \f(\xi))\,dx.
\end{multline*}
Finally, by the Lipschitz continuity of $\partial \om$, from \eqref{ultima2} and (\ref{Reps2})
we also deduce that $(\beta_\e/(N_\e|\log \e|)^{1/2} - \beta) \cdot t$ tends to zero strongly in $H^{-\frac{1}{2}}(\partial \om)$.
\end{proof}

\begin{remark}
{\rm
The case $N_\e\le C$ has been considered in \cite{CL}, where  the asymptotic behavior of
the elastic energy for a fixed distribution of dislocations $\mu_\e\equiv \mu$ is provided up
to the second order, and in \cite{Po}, where  the problem of $\Gamma$-convergence induced by  screw dislocations  is addressed
(with $E_\e(\beta):= \|\beta\|_2^2$), without any assumption involving  the notion of hard core region
(essentially with $\rho_\e \approx \e$).
\par
We could extend the result given by Theorem \ref{mainthmsc2} to the case $N_\e \le C$, obtaining
a $\Gamma$-limit which has still the form as in \eqref{glsc},
but with $\mu:= \sum_{i=1}^M \xi_i \, \delta_{x_i}$, where $\xi_i\in \mathbb S$, and with $\f:\mathbb S \to \R$ defined now by
$$
\varphi(\xi):= \inf \left \{ \sum_{k=1}^N  \psi(\xi_k): \, \sum_{k=1}^N  \xi_k = \xi, \, N\in\N,  \, \xi_k\in \mathbb S \right \}.
$$
}
\end{remark}

\section{The super-critical case ($N_\e>\!\!> |\log\e|$)}\label{sup}
In this section we study the asymptotic behavior of the energy
functionals $E_\e$ defined in \eqref{elaene}
in the super-critical case, i.e., for $N_\e >\!\!> |\log\e|$. In terms of $\Gamma$-convergence, it means that we rescale
$E_\e$ by $N_\e^2$, obtaining the rescaled energy functionals $\Fsu_\e$ defined in \eqref{renefa3}.
As we discussed in Section \ref{Scalings}, in this case we  have that the  interaction energy for
minimizing sequences is predominant with respect to the self-energy.
\par
The natural rescaling for the strains in this case is given by $N_\e$, but we don't have any control
on the total variation of the dislocation measure. As a consequence we will get a limit energy
$\Fsu$ defined on  $L^2(\Om;\Mdsym)$ depending on strains   given by
\begin{equation}\label{glsu}
\Fsu(\beta\sym):=
\int_{\Om} W(\beta\sym) \, dx \quad
\text{ if } \beta\sym\in L^2(\om;\Mdsym),
\end{equation}
where $\Mdsym$ denotes the class of symmetric matrices $M\sym$ in $\Md$.
 The precise  $\Gamma$-convergence result is the following.

\begin{theorem}\label{mainthm3}
Let $N_\e$ be such that $N_\e/|\log \e|\to \infty$ as $\e\to 0$. Then the following $\Gamma$-convergence result holds.
\begin{itemize}
\item[i)]{\bf Compactness.}
Let $\en\to 0$ and let $\{(\mu_n,\beta_n)\}$ be a sequence in $\ms(\om;\R^2)\times L^2(\Om;\M^{2\times 2})$
such that $\Fsu_{\en}(\mu_n,\beta_n)\le  E$ for some positive constant $ E$ independent of $n$.
Then there exists a  strain $\beta\sym\in L^2(\Om;\Md)$, such that (up to a subsequence)
\begin{equation}\label{not2su}
\frac{1}{N_{\en}} \beta_n\sym \weak \beta\sym \quad \text{ in } L^2(\Om;\Mdsym).
\end{equation}

\item[ii)]{\bf $\Gamma$-convergence.}
The functionals  $\Fsu_{\e}$ $\Gamma$-converge as ${\e} \to 0$, with respect to
the convergence in  \eqref{not2su}, to the functional $\Fsu$ defined  in \eqref{glsu}.
More precisely, the following inequalities hold.
\begin{itemize}
\item[]{$\Gamma$-liminf inequality:}
for every  $\beta\sym \in L^2(\Om;\Mdsym)$
and for every sequence
$(\mu_\e,\beta_\e) \in X_{\e}\times L^2(\Om;\M^{2\times 2})$
satisfying  \eqref{not2su} we have
$$
\liminf_{\e\to 0}\Fsu_{\e}(\mu_\e,\beta_\e) \ge \Fsu(\beta\sym);
$$
\item[]{$\Gamma$-limsup inequality:}
given  $\beta\sym\in L^2(\Om;\Mdsym)$ there exists
$(\mu_\e,\beta_\e)\in X_\e\times L^2(\Om;\M^{2\times 2})$ satisfying \eqref{not2su}  such that
$$
\limsup_{\e\to 0}\Fsu_{\e}(\mu_\e,\beta_\e) \le \Fsu(\beta\sym).
$$
\end{itemize}
\end{itemize}
\end{theorem}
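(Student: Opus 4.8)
The plan is to follow closely the proofs of Theorems~\ref{mainthm} and \ref{mainthmsc2}, the guiding observation being that in this regime the self-energy, of order $N_\e|\log\e|$, is negligible with respect to $N_\e^2$; consequently both the Curl constraint and the mass of $\mu$ disappear in the limit, and the limiting energy depends only on $\beta\sym$, through the convex and \emph{strongly $L^2$-continuous} functional $\gamma\mapsto\int_\Om W(\gamma)\,dx$. I will repeatedly use that, by \eqref{coercon2}, $W(\xi)=W(\xi\sym)$, so that on admissible pairs $\Fsu_\e(\mu,\beta)=\frac1{N_\e^2}\int_\Om W(\beta)\,dx=\int_\Om W(\beta\sym/N_\e)\,dx$. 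For \emph{compactness}, from $\Fsu_{\en}(\mu_n,\beta_n)\le E$ and \eqref{coercon2} I get $\int_\Om|\beta_n\sym|^2\,dx\le\frac2{c_1}\int_\Om W(\beta_n)\,dx\le\frac{2E}{c_1}N_{\en}^2$, hence $\{\beta_n\sym/N_{\en}\}$ is bounded in $L^2(\Om;\Mdsym)$ and \eqref{not2su} holds along a subsequence. I would point out that, in contrast with the other two regimes, here the self-energy only yields $|\mu_n|(\Om)\le CN_{\en}^2/|\log\en|$, which does not control $\mu_n/N_{\en}$; this is exactly why there is no compactness, and no limiting constraint, for the dislocation measures.

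For the \emph{$\Gamma$-liminf inequality}, let $(\mu_\e,\beta_\e)$ satisfy \eqref{not2su}; if $\beta_\e\notin\asmn(\mu_\e)$ then $\Fsu_\e=+\infty$, and otherwise $\Fsu_\e(\mu_\e,\beta_\e)=\int_\Om W(\beta_\e\sym/N_\e)\,dx$, so that weak $L^2$ lower semicontinuity of $\gamma\mapsto\int_\Om W(\gamma)\,dx$ (a nonnegative continuous quadratic functional, hence convex and weakly sequentially lower semicontinuous) gives $\liminf_\e\Fsu_\e(\mu_\e,\beta_\e)\ge\int_\Om W(\beta\sym)\,dx=\Fsu(\beta\sym)$. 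This is the estimate used for the interaction term in the critical case, with the self-energy term simply dropped.

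For the \emph{$\Gamma$-limsup inequality}, since $\Fsu$ is strongly $L^2$-continuous, a density/diagonal argument reduces matters to $\beta\sym$ smooth, and then the localization and density arguments of Steps~2 and 3 of the $\Gamma$-limsup proof of Theorem~\ref{mainthm} (now easier, since no compatibility $\Curl\beta=\mu$ is required in the limit) reduce further to the model case: $\xi\in\R^2$, $\beta\in L^2(\Om;\Md)$ with $\Curl\beta=\xi\,dx$, target $\beta\sym$. For this case I would mimic Step~1 of that proof with $|\log\e|$ replaced by $N_\e$. Take the configuration $\mu_\e=\sum_i\xi_{\e,i}\delta_{x_{\e,i}}\in X_\e$ given by Lemma~\ref{recovery} applied with the given $N_\e$ (which satisfies \eqref{ipn}); since $r_\e\sim N_\e^{-1/2}\gg\rho_\e$ (because $N_\e\rho_\e^2\to0$) one has $\mu_\e\in X_\e$, and $\tilde\mu^{r_\e}_\e/N_\e\to\xi\,dx$ strongly in $H^{-1}$. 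For each $x_{\e,i}$ pick, via Proposition~\ref{equicell}, a field $\hat\beta_{\e,i}\in\asm_{\e,\rho_\e}(\xi_{\e,i})$ coinciding with $\hat K_{\e,i}^{\xi_{\e,i}}$ near $\partial B_\e(x_{\e,i})\cup\partial B_{\rho_\e}(x_{\e,i})$, extend it by $\hat K_{\e,i}^{\xi_{\e,i}}$ on $B_{r_\e}(x_{\e,i})\setminus B_{\rho_\e}(x_{\e,i})$ and by $0$ elsewhere, set $\hat\beta_\e:=\sum_i\hat\beta_{\e,i}$, and put
\[
\bar\beta_\e:=N_\e\beta-\tilde K_\e^{\mu_\e}+\hat\beta_\e,\qquad\beta_\e:=(\bar\beta_\e+R_\e)\,\chi_{\Om_\e(\mu_\e)}.
\]
By \eqref{rotori}, $\Curl\bar\beta_\e$ coincides on $\Om_\e(\mu_\e)$ with $N_\e\,\xi\,dx-\tilde\mu^{r_\e}_\e$, so $N_\e^{-1}\Curl\bar\beta_\e\res\Om_\e(\mu_\e)\to0$ in $H^{-1}(\Om;\R^2)$ and, exactly as in the critical case, one can choose a perturbation $R_\e$ with $\|R_\e\|_{L^2}\le C\|\Curl R_\e\|_{H^{-1}}=o(N_\e)$ (and a vanishing skew constant) making $\beta_\e\in\asmn(\mu_\e)$. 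The per-dislocation bounds $\|\tilde K_{\e,i}^{\xi_{\e,i}}\|_{L^2}^2\le C|\xi_{\e,i}|^2$ and $\|\hat\beta_{\e,i}\sym\|_{L^2}^2\le C|\xi_{\e,i}|^2|\log\e|$ (the latter from Proposition~\ref{equicell} and \eqref{coercon2}), summed over the $\sim N_\e$ dislocations, give $\tilde K_\e^{\mu_\e}/N_\e\to0$ and $\hat\beta_\e\sym/N_\e\to0$ \emph{strongly} in $L^2$ — here I crucially use $N_\e\gg|\log\e|$, which is what distinguishes this regime from the critical one, where these terms only converge weakly. Together with $\chi_{\Om_\e(\mu_\e)}\to1$ (the cores have total area $\le CN_\e\e^2\to0$ by \eqref{ipn}), this yields $\beta_\e\sym/N_\e\to\beta\sym$ strongly in $L^2(\Om;\Mdsym)$, hence \eqref{not2su}, and
\[
\Fsu_\e(\mu_\e,\beta_\e)=\int_\Om W\bigl(\beta_\e\sym/N_\e\bigr)\,dx\ \longrightarrow\ \int_\Om W(\beta\sym)\,dx=\Fsu(\beta\sym).
\]

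I expect the only genuinely delicate step to be, as in the other regimes, the construction of the admissible discrete measure $\mu_\e\in X_\e$: it must carry of order $N_\e$ dislocations with Burgers vectors in $\mathbb S$, be $2r_\e$-separated with $\rho_\e\ll r_\e$, and approximate $N_\e\,\Curl\beta$ finely enough that $\tilde\mu^{r_\e}_\e/N_\e$ converges in $H^{-1}$. This is precisely the localized-cube construction behind the general case of Lemma~\ref{recovery}; the two quantitative inputs that make it go through are $N_\e\rho_\e^2\to0$ (room for the dislocations) and the lattice property ${\rm Span}_\Z S=\mathbb S$ with ${\rm Span}_\R S=\R^2$ (on each small cube the required Burgers content equals an integer combination of the $b_i$ up to an $O(1)$, hence lower-order, error). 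Everything else is a simplification of the critical-regime argument, obtained by discarding the now lower-order self-energy contributions.
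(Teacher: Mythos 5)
Your proposal is correct, and its skeleton (compactness from the coercivity bound \eqref{coercon2}, $\Gamma$-liminf by weak $L^2$ lower semicontinuity of $\int_\Om W$, $\Gamma$-limsup by discretizing $N_\e\,\Curl\beta$ into an $X_\e$-measure with $r_\e\sim N_\e^{-1/2}$ spacing and exploiting that the core fields carry only $O(N_\e|\log\e|)=o(N_\e^2)$ energy) is the same as the paper's. The one place you diverge is the reduction in the limsup: you route the argument through the critical-regime machinery --- localization to piecewise-constant $\Curl\beta$ (Steps~2--3), Lemma~\ref{recovery}, and the near-optimal cell-problem profiles $\hat\beta_{\e,i}$ of Proposition~\ref{equicell} together with the correction $-\tilde K_\e^{\mu_\e}$ --- whereas the paper simply mollifies so that $\Curl\beta=g\,dx$ with $g$ continuous, builds $\mu_\e$ with bounded weights directly, and takes $\bar\beta_\e:=N_\e\beta+\hat K_\e^{\mu_\e}$, since in this regime the exact self-energy constant is irrelevant and the crude fields $\hat K_\e^{\mu_\e}$ already satisfy $\frac1{N_\e^2}\int_\Om|\hat K_\e^{\mu_\e}|^2\,dx\le C|\log\e|/N_\e\to0$. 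Your version works (and even yields strong convergence of $\beta_\e\sym/N_\e$, as you note), but it imports unnecessary precision --- in particular the gluing across the sets $A_l$ requires the $H^{-1/2}(\partial A_l)$ trace control, which the paper's mollification route sidesteps entirely; the paper's choice buys a shorter proof at no cost because optimality of the self-energy only matters in the critical and dilute regimes.
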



\begin{proof}[Proof of Theorem \ref{mainthm3}]
The compactness property is simply due to the usual apriori $L^2$ bound for the strains $\beta_{\en}\sym$,
while the $\Gamma$-liminf inequality  comes simply by lower semicontinuity.
\par
The main difference with respect to the previous energy regimes is in the proof of the $\Gamma$-limsup
inequality. Again the strategy is  to approximating a special class of limiting configurations and
then to proceed by density, but in this case it will be more convenient to approximate the strains
$\beta$ with $C^1$ functions, so that their  Curl's  are continuous.

Thus,  fix $\beta\in L^2(\Om;\Md)$ such that $\Curl \beta$ is a measure $\mu$ of the type $\mu= g(x) \, dx$
with $g$ continuous and let us construct a sequence
$\{\mu_\e\} \subset X_{\e}(\Om)$   and  a sequence $\beta_\e\in \asmn(\mu_\e)$,
with $ \beta_\e/N_\e\weak \beta$ in $L^2(\Om;\Md)$,  such that
the following $\Gamma$-limsup inequality holds true
\begin{equation}\label{glsasu}
\limsup_{\e\to 0}\frac{1}{N^2_{\e}} \int_\Om W(\beta_\e)\,dx \le \int_\Om W(\beta)\,dx.
\end{equation}

Arguing as in the proof of Lemma \ref{recovery}, it is easy to prove that there exist $C\in \R$
depending only on $\|g\|_{L^\infty(\Om;\R^2)}$ and a sequence of measures
$\mu_\e:=\sum_{i=1}^{M_\e} \xi_{i,\e} \delta_{x_{i,\e}}\in X_\e$, with $|\xi_{i,\e}|\le C$,
such that, setting $r_\e:= C/\sqrt{N_\e}$,
we have $B_{{r_\e}}(x_{i,j}) \in\Om, \,
|x_{i,\e} -x_{j,\e}| \ge 2r_\e$  for every
$x_{i,\e}, \, x_{j,\e}$ in the support set of $\mu_\e$, and, finally,
$$
\frac{\mu_\e}{N_\e} \weakst \mu \text{ in } \ms(\Om;\R^2), \quad
\frac{\hat \mu^{r_\e}_\e}{N_{\e}} \to \mu  \text{ strongly in } H^{-1}(\Om; \R^2),
$$
where $\hat \mu_\e$ is defined according to \eqref{tmr}.
Consider the functions $\hat K_\e^{\mu_\e}$  defined in \eqref{dekeb}
and set
$\bar \beta_\e:= N_\e \beta  + \hat K_\e^{\mu_\e}$.
By \eqref{rotori} we deduce
$$
\Curl \frac{\bar \beta_\e}{ N_\e}\LL \Omega_\e(\mu_\e)= (\mu -\frac{ \hat \mu^{r_\e}_\e}{N_\e}) \to 0 \quad \text{ in }
H^{-1}(\Om; \R^2).
$$
Therefore, we can add to $\bar\beta_\e$ a vanishing sequence $\frac{R_\e}{N_\e} \to 0$ in
$L^{2}(\Om;\Md)$, obtaining the admissible strain
\begin{equation*}
\beta_\e:= (\bar \beta_\e + R_\e) \chi_{\Om_\e(\mu_\e)}.
\end{equation*}
In order to prove that the pair $(\mu_\e,\beta_\e)$ is the desired recovery sequence it is enough
to observe that $\hat K_\e^{\mu_\e}\to 0$ in $L^2(\Om;\Md)$. Indeed, by construction we have $M_\e \le C N_\e$, and therefore
$$
\lim_{\e\to 0} \frac{1}{N_\e^2}\int_{\om_\e} |\hat K_\e^{\mu_\e}|^2 \, dx \le
\lim_{\e\to 0} \frac{C}{N_\e^2}  M_\e |\log \e|
\le \lim_{\e\to 0} C \frac{|\log \e|}{N_\e}  =0.
$$
\end{proof}

\begin{remark}
{\rm Note that in the super-critical regime we can not have a
compactness property for the antisymmetric part of the admissible
strains $\beta_\e$, and indeed it is easy to exhibit  examples where
$\|\beta_\e\sym/N_\e\|_{L^2(\Om;\Mdsym)}\le C$ and
$\|\beta_\e\as/N_\e\|_{L^2(\Om;\Md)}\to \infty$. Note that,  since
we do not have any control on the mass of $\Curl \beta_\e$, we can
not apply Theorem \ref{KTI}. }
\end{remark}

\section*{Acknowledgments}
We want to thank Paolo Cermelli for many fruitful discussions, that have been essential in  formulating
the problem. We also thank Adriano Pisante for stimulating interactions. We thank the Center for Nonlinear Analysis (NSF
Grants No. DMS-0405343 and DMS-0635983) for its support during the preparation
of this paper. The research of G. Leoni was partially supported by the
National Science Foundation under Grants No. DMS-0405423 and DMS-0708039.

\end{document}